\def\eqref#1{equation~\ref{#1}}
\def\1{\bm{1}}
\def\vp{{\bm{p}}}
\def\vx{{\bm{x}}}
\def\vz{{\bm{z}}}
\def\mH{{\bm{H}}}
\def\mK{{\bm{K}}}
\def\mQ{{\bm{Q}}}
\def\mV{{\bm{V}}}
\def\mW{{\bm{W}}}
\DeclareMathAlphabet{\mathsfit}{\encodingdefault}{\sfdefault}{m}{sl}
\SetMathAlphabet{\mathsfit}{bold}{\encodingdefault}{\sfdefault}{bx}{n}
\newtheorem{assumption}{Assumption}
\newtheorem{proposition}{Proposition}
\definecolor{keywords}{RGB}{0, 0, 0}        
\definecolor{comments}{RGB}{0, 0, 0}        
\definecolor{strings}{RGB}{0, 0, 0}       
\bfseries\color{keywords},   
\tiny\color{gray},            
\title{TPI-LLM: Serving 70B-scale LLMs Efficiently on Low-resource Edge Devices}
\author{Zonghang Li \\
Department of Machine Learning \\
MBZUAI\\
\texttt{zonghang.li@mbzuai.ac.ae} \\
\And
Wenjiao Feng \thanks{Zonghang Li and Wenjiao Feng contribute equally.} \\
School of Info \& Comm Engineering \\
UESTC \\
\texttt{wenjiaofeng@std.uestc.edu.cn} \\
\AND
Mohsen Guizani \\
Department of Machine Learning \\
MBZUAI \\
\texttt{mohsen.guizani@mbzuai.ac.ae} \\
\And\And
Hongfang Yu \\
School of Info \& Comm Engineering \\
UESTC \\
\texttt{yuhf@uestc.edu.cn}
}
\begin{document}
\maketitle

\begin{abstract}
Large model inference is shifting from cloud to edge due to concerns about the privacy of user interaction data. However, edge devices often struggle with limited computing power, memory, and bandwidth, requiring collaboration across multiple devices to run and speed up LLM inference. Pipeline parallelism, the mainstream solution, is inefficient for single-user scenarios, while tensor parallelism struggles with frequent communications. In this paper, we argue that tensor parallelism can be more effective than pipeline on low-resource devices, and present a compute- and memory-efficient tensor parallel inference system, named TPI-LLM, to serve 70B-scale models. TPI-LLM keeps sensitive raw data local in the users' devices and introduces a sliding window memory scheduler to dynamically manage layer weights during inference, with disk I/O latency overlapped with the computation and communication. This allows larger models to run smoothly on memory-limited devices. We analyze the communication bottleneck and find that link latency, not bandwidth, emerges as the main issue, so a star-based allreduce algorithm is implemented. Through extensive experiments on both emulated and real testbeds, TPI-LLM demonstrated over 80\% less time-to-first-token and token latency compared to Accelerate, and over 90\% compared to Transformers and Galaxy, while cutting the peak memory footprint of Llama 2-70B by 90\%, requiring only 3.1 GB of memory for 70B-scale models.
\end{abstract}

\section{Introduction}
Recently, Large Language Models (LLMs) have been widely deployed in the cloud for inference. User inputs are uploaded to the cloud, where high-performance GPUs are used to compute output sequences, and then sent back to user devices for display. This process poses privacy risks, as user prompts are exposed to network intermediaries and clouds. 
Therefore, there is an increasing need to shift LLM services to the network edge, such as on laptops, hand phones, tablets, and desktop computers. However, edge devices have very limited memory (4-16 GB) and computing power (often CPU-only). Even with quantization, running a Llama 3.1-70B model still requires at least 40 GB of memory, which far exceeds the capacity of most edge devices. Besides, running Bert-L on one Nano-M device results in a latency that is 120$\times$ longer than on one A100 GPU. This gap requires the use of more edge devices to support and speed up LLM inference on the network edge.

While advanced LLM serving systems \citep{shoeybi2019megatron,rasley2020deepspeed,288582,298679,miao2024spotserve} have been designed for high-performance GPU clusters, recent efforts \citep{zhang2024edgeshard, mei2024helix, borzunov2024distributed} are adapting these systems to edge environments, by adaptively partitioning model between edge devices and optimizing schedulers to boost token throughput. However, in smart home scenarios like smart speaker, edge LLM systems often handle one user request at a time, making them degrade from pipeline to model parallelism and leaving devices idle most of the time. Thus, tensor parallelism is preferred for better efficiency. For instance, \cite{ye2024galaxy} combine tensor and sequence parallelism to reduce token latency and \cite{wei2024communication} introduce block parallelism to restructure Transformer layers. 

However, even with 8 devices sharing the load, running full-precision Llama 2-70B still requires 35 GB per device, memory remains a shortage. Solutions like memory block paging \citep{kwon2023efficient} and optimized KVCache storage \citep{jin2023s,lee2024infinigen} help schedule data between GPUs and CPUs, but unfortunately, GPUs are not available on most edge devices. As a popular alternative, Accelerate \citep{accelerate} can offload model data from a CPU to a disk to run larger models, but its blocking I/O drastically slows inference, with token latency increases to 30 seconds per token on Llama 3.1-8B.

In this work, we analyze why tensor parallelism is more effective than model parallelism on low-resource edge devices and present TPI-LLM, a computing- and memory-efficient tensor parallel inference framework to serve LLM models. Constrained by the high link latency, a star-based allreduce algorithm is implemented. To address the memory shortage, a sliding window memory scheduler is further introduced. We build a prototype of TPI-LLM with 3K LoC and two testbeds using Klonet \citep{ma2024klonet} and 4 laptops. Extensive results on Llama 3.1-8B/70B \citep{dubey2024llama}, Llama 2-3B/7B/13B/70B \citep{touvron2023llama} and Yi-34B \citep{ai2024yi} demonstrate the significant reduction of the memory footprint and faster inference speed compared to Transformers \citep{wolf2020transformers}, Acclerate \citep{accelerate}, and Galaxy \citep{ye2024galaxy}.

We summarize the main contributions of this work as follows:
\begin{itemize}
\item We design a TPI-LLM for edge LLM serving, which keeps prompt privacy in mind to allow edge devices with limited computing power collaborate to deliver faster inference.
\item We find that network bandwidth is no longer an issue. Instead, link latency causes high delays in advanced allreduce algorithms. Thus, a star-based allreduce algorithm is implemented, which greatly outperforms ring- and tree-based methods.
\item We introduce a sliding window memory scheduler, which asynchronously loads and unloads layer weights and overlaps disk I/O latency with computations and communications, enabling the inference of larger models on low-memory devices.
\item We prototype TPI-LLM and show that it reduces time-to-first-token and token latency by over 80\% compared to Accelerate and over 90\% compared to Transformers and Galaxy. It serves Llama 2-70B with a peak memory footprint of 3.1 GB across 8 low-resource devices.
\end{itemize}

\section{Observations and Motivations}
Before presenting our TPI-LLM system, we address two questions that guide our design:

\textit{Q1: On low-resource edge devices, which dominate inference time: computation or communication? Which is more efficient, tensor parallelism or model parallelism?}

On the network edge, the balance between computation and communication differs from that in high-performance GPU clusters. To determine whether tensor or model parallelism offers more benefits, it is essential to identify which—computation or communication—takes up more time. For this purpose, we examine the Llama 3.1-8B model on a LAN network with 4 laptops of 8 cores. The network bandwidth between them is 178 Mbps, and the devices implement allreduce communications using a parameter server architecture \citep{li2014scaling}.

\begin{figure}[h]
    \hspace{-1cm}
    \centering
    \begin{subfigure}[b]{0.38\textwidth}
        \centering
        \includegraphics[height=.17\textheight]{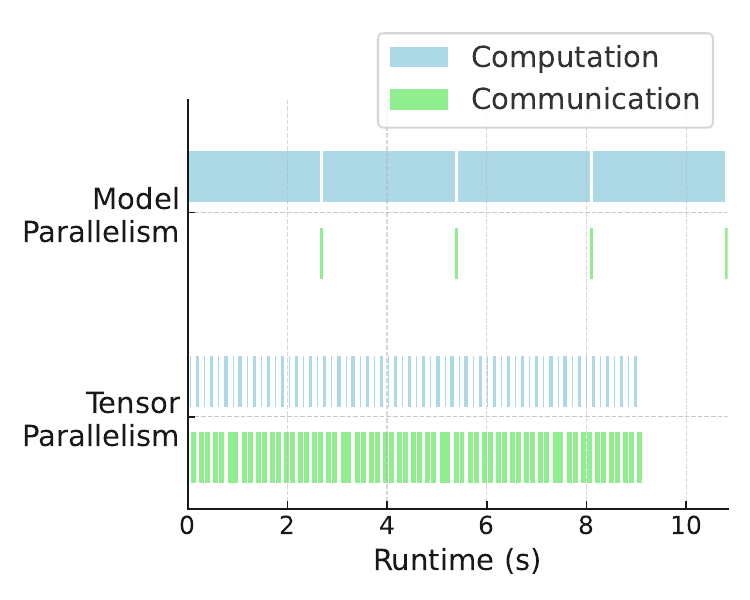}
        \caption{}
        \label{fig:timeline_model_vs_tensor_parallelism}
    \end{subfigure}
    \begin{subfigure}[b]{0.28\textwidth}
        \centering
        \includegraphics[height=.17\textheight]{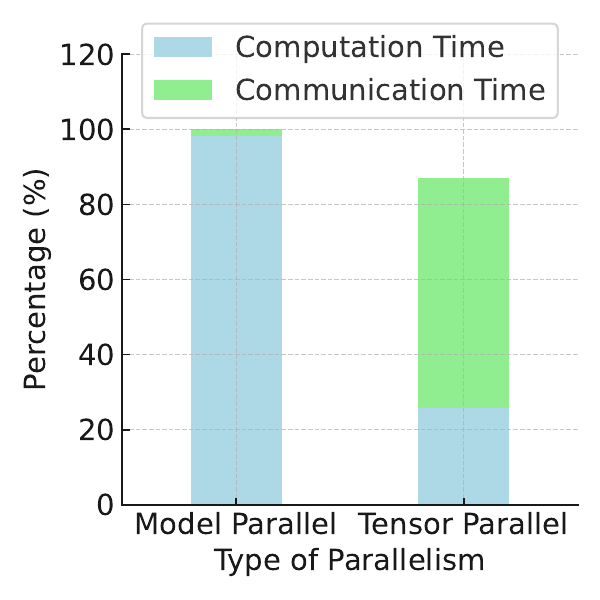}
        \caption{}
        \label{fig:compute_communication_time_ratio}
    \end{subfigure}
    \hspace{3mm}
    \begin{subfigure}[b]{0.28\textwidth}
        \centering
        \includegraphics[height=.17\textheight]{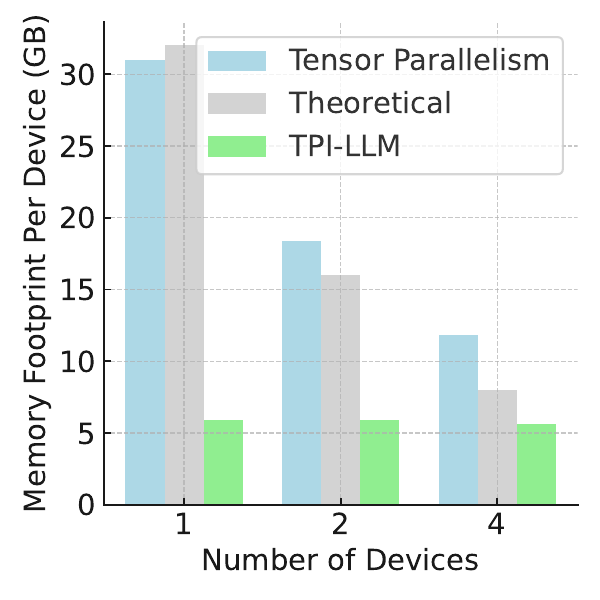}
        \caption{}
        \label{fig:tp_memory_usage_comparison}
    \end{subfigure}
    \caption{Comparison of (a,b) tensor and model parallelism in terms of computational and communication time and (c) memory footprint each device with increasing tensor parallel nodes.}
\end{figure}

Figures \ref{fig:timeline_model_vs_tensor_parallelism} and \ref{fig:compute_communication_time_ratio} show the timeline and computing-communication time ratio for model and tensor parallelism during inference. In model parallelism, communication accounts for only 2\% of the time, with most spent on computation. However, when one device is computing, others are idle, creating pipeline bubbles and resource waste. In tensor parallelism, communication rises to 70\%, but all devices compute simultaneously, and the speed boost outweighs the communication cost, leading to less overall inference time. This makes tensor parallelism the preferred choice.



\textit{Q2: Is tensor parallelism enough for edge LLM serving?}

Tensor parallelism does reduce memory footprint each device by sharing model parameters across multiple devices, but it doesn’t fully address the memory shortage. Figure \ref{fig:tp_memory_usage_comparison} shows that even with 4 tensor parallel nodes, memory footprint remains at 12 GB—still too high for most edge devices. This is because memory footprint includes not just model parameters but also intermediate results, key value cache, libraries, etc., causing the actual usage to exceed the theoretical value. Besides, other apps on the device also compete for memory, which worsens the shortage. Thus, even with tensor parallelism, a memory scheduler is still needed to avoid out-of-memory (OOM) issues.

\section{TPI-LLM Framework with Sliding Window Memory Scheduling}
In a typical inference workflow, many users send their prompts to a cloud-based service. These prompts are pooled and scheduled in batches, undergoing dozens of Transformer layers, and converted into probabilities to predict the next token. This process repeats until the generated sequence is finished. While the fundamental workflow on the cloud and edge are similar, key differences arise:

\textit{(a) Keep prompts and generated sequences on users' device.}
In a cloud setup, user prompts are sent to remote servers for processing, which result in exposure of private data. Edge LLM serving systems are required to keep prompts and generated sequences in users’ own devices to ensure raw data never get exposed to external unknown environments.

\textit{(b) More single-prompt serving.} Current LLM serving systems are typically optimized for batched prompts using pipeline scheduling. However, these optimizations lead to resource underutilization in edge scenarios like smart speakers, where only one prompt is processed at a time.

\textit{(c) Low-resource devices without CUDA support.} 
Edge devices, unlike cloud GPUs, have very limited memory and low computing power. Many of them lack CUDA support or do not have GPUs at all, and they often prioritize full precision to ensure faster computations.


\subsection{The Parallel Framework Design of TPI-LLM System}\label{sec:system-overview}

\begin{figure}[t]
    \centering
    \includegraphics[width=\textwidth]{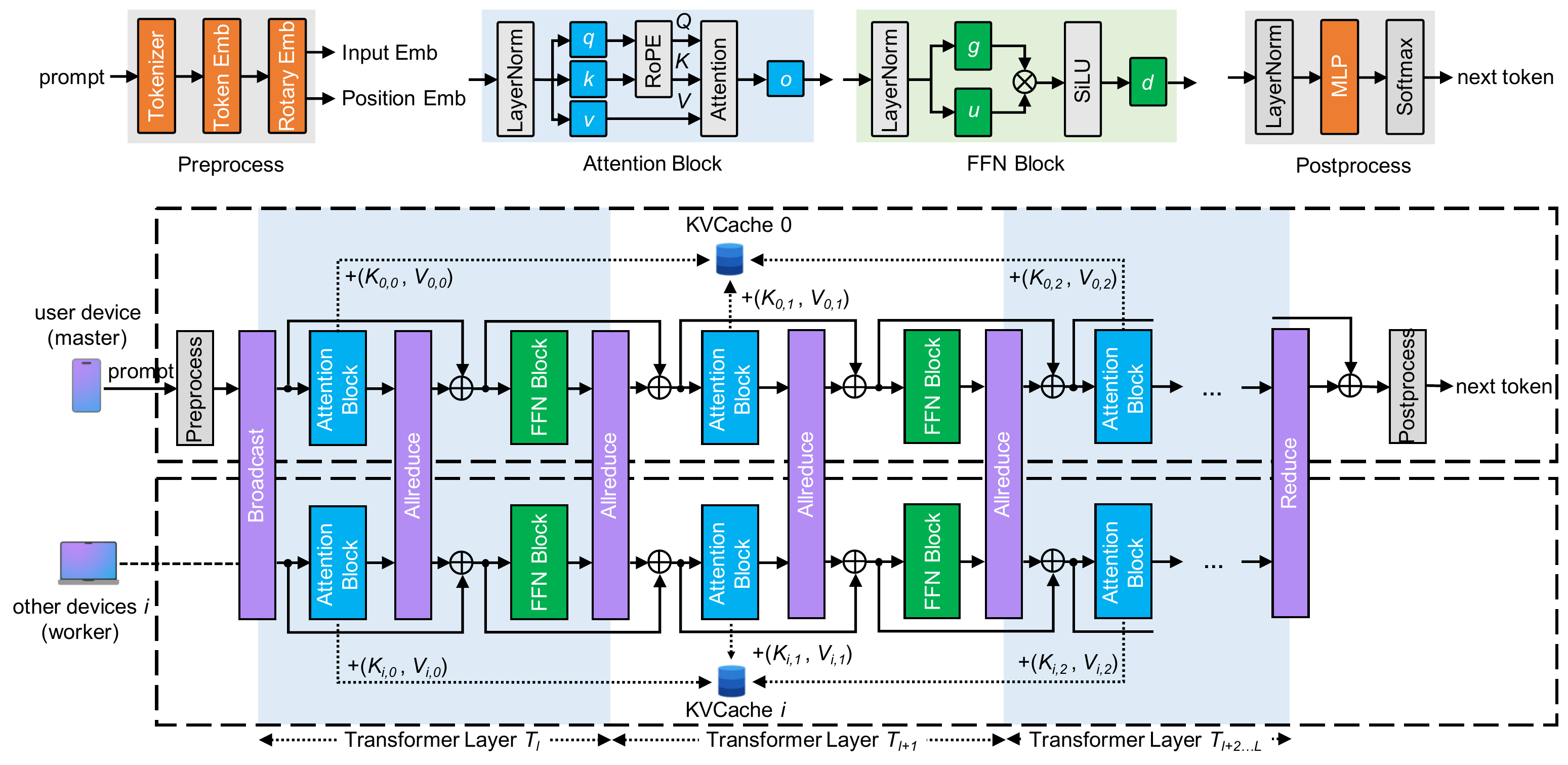}
    \caption{Overview of the TPI-LLM parallel framework.}
    \label{fig:system-overview}
\end{figure}

The proposed tensor parallel inference system (TPI-LLM) tackles these challenges by using a tensor parallel framework that distributes attention heads across multiple nodes. As depicted in Figure \ref{fig:system-overview}, it involves a master node, typically the user's device that initiates the prompt, and several worker nodes that share the computational load. Their pseudo codes are given in Algorithms \ref{alg:master} and \ref{alg:worker}.

\textit{Step 1: The master node partitions and distributes model weights.} 
Before inference begins, the master node partitions the pretrained model weights $\mW$, such as attention heads and FFN weights, among the worker nodes. Workers with greater computing power and larger memory are allocated more attention heads and FFN weights. This ensures no single device bears the full burden.

\textit{Step 2: The master node initiates prompt and broadcast the input embedding to workers.}
The inference process starts at the master node, where a user prompt is tokenized into a list of token indices $\vx$ and then transformed into input embeddings $\mH^0=\vx \mW_{\mathtt{emb}}$. The embedding is then broadcast to all worker nodes $\mH^0_{\mathtt{ffn}}=\mH^0$ to initiate the tensor parallel workflow.

\textit{Step 3: All nodes perform tensor parallel computing.}
The tensor parallel computing follows a cycle of four operations: attention computing, allreduce, FFN computing, and allreduce. These operations together constitute a Transformer block. Devices compute attention and FFN with partitioned weights in parallel, reducing the computing delays on low-power devices. 

In the attention computation phase of the $l$-th Transformer block, device $h$ processes only a subset of attention heads $\mQ^{h,l} = \mH^l_{\mathtt{norm}} \mW_Q^{h,l}, \mK^{h,l} = \mH^l_{\mathtt{norm}} \mW_K^{h,l}, \mV^{h,l} = \mH^l_{\mathtt{norm}} \mW_V^{h,l}$, where $\mH^l_{\mathtt{norm}}=\mathtt{norm}(\mH_{\mathtt{ffn}}^{l-1})$ is the normed hidden state and weight partitions $\mW_Q^{h,l}, \mW_K^{h,l}, \mW_V^{h,l}$ are downloaded from the master node in Step 1. Once $\mQ^{h,l},\mK^{h,l},\mV^{h,l}$ are computed, we apply the scaled dot-product attention to calculate the attention score, and the result is then synchronized across devices:
\begin{equation}
\mH_{\mathtt{attn}}^{l} = \mathtt{all\_reduce}( \mathtt{softmax}( \frac{\mQ^{h,l} (\mK^{h,l})^\top}{\sqrt{d}}) \mV^{h,l})+\mH_{\mathtt{ffn}}^{l-1},
\end{equation}
where $d$ is the dimension for attention head. Here, attention is computed in parallel across devices, followed by an allreduce to aggregate their hidden states and a shortcut connection. The key-value pair $(\mK^{h,l},\mV^{h,l})$ is cached locally on device $h$ to reduce redundant computations. This distributed KVCache partitions the cache across devices, so memory cost is reduced on individual device.

After the attention computation and allreduce, the process continues with the FFN computation:
\begin{equation}
\mH_{\mathtt{ffn}}^{l} = \mathtt{all\_reduce}( \mW_{\mathtt{down}}^{h,l} \cdot ( \sigma(\mW_{\mathtt{gate}}^{h,l} \cdot \mH_{\mathtt{norm}}^{l}) \odot (\mW_{\mathtt{up}}^{h,l} \cdot \mH_{\mathtt{norm}}^{l})))+\mH_{\mathtt{attn}}^{l},
\end{equation}
where FFN weights $\mW_{\mathtt{gate}}^{h,l}, \mW_{\mathtt{up}}^{h,l}, \mW_{\mathtt{down}}^{h,l}$ are also partitioned weights, $\mH_{\mathtt{norm}}^{l}=\mathtt{norm}(\mH_{\mathtt{attn}}^{l})$, $\sigma$ represents the activation function such as SiLU \citep{elfwing2018sigmoid}. Similar to the attention computation stage, the FFN is computed in parallel, followed by an allreduce and a shortcut connection.

\textit{Step 4: The master node reduces tensor parallel results and calculates the next token.} After each node $h$ completes its part of computation within the backbone network, the result is sent to the master node. The summed results $\mH_{\mathtt{ffn}}^L$ are then passed through a task head $\mW_{\mathtt{head}}$ and softmax to obtain the probability distribution of the next token $\vz=\mathtt{softmax}(\mH^L_{\mathtt{ffn}}\mW_{\mathtt{head}})$, which is then sampled. Steps 2 to 4 repeat until an EOS token is generated or the length limit is reached.

TPI-LLM provides three benefits: (i) The user prompt $\{\vx_1,\vx_2,\cdots\}$ and generated sequence $\{z_1\sim\vz_1,z_2\sim\vz_2,\cdots\}$ are processed only on the master node, keeping them hidden from workers. Even if workers reverse-engineer input embeddings $\mH^0$, they cannot recover the raw prompt $\vx$ or next token $z\sim\vz$ since the weights of input embedding $\mW_{\mathtt{emb}}$ and task head $\mW_{\mathtt{head}}$ reside solely on master. (ii) The inference speed is often limited by the computational latency, but in TPI-LLM, it is accelerated via parallel computing. (iii) Unlike other systems that use a mix of communication primitives (reduce \& broadcast \citep{shoeybi2019megatron}, reducescatter \& allgather \citep{ye2024galaxy}, etc.), TPI-LLM standardizes communications to allreduce. This enhances compatibility with broader communication libraries like PS-LITE \citep{chen2015mxnet} and NetStorm \citep{li2024netstorm}, leveraging their optimized implementations for edge conditions.

\begin{table}[t]
\centering
\small
\noindent
\begin{minipage}{0.48\textwidth}
\begin{algorithm}[H]
\SetAlgoLined
\LinesNumbered
\BlankLine
Split and distribute pretrained weight files to worker nodes\;
Tokenize user prompt into indices\;
Start memory scheduler\;
\BlankLine
\While{generated sequence not finished}{
    \textit{Preprocess:} Convert indices to input and position embeddings, calculate causal mask and cache position\;
    \textit{Broadcast:} Send embeddings, causal mask, and cache position to workers\;
    \ForEach{decoder layer $l$}{
        \textit{Attention:} Execute layernorm, self-attention, and store $(K_l^0, V_l^0)$ in KVCache $\mathcal{D}^0$\;
        \textit{Allreduce:} Aggregate attention outputs\;
        \textit{FFN:} Execute layernorm and FFN\;
        \textit{Allreduce:} Aggregate FFN outputs\;
    }
    \textit{Reduce:} Sum final outputs with others\;
    \textit{Postprocess:} Execute layernorm, MLP, softmax, and sample next token\;
}
\caption{Master (with rank 0):}
\label{alg:master}
\end{algorithm}
\end{minipage}
\hfill
\begin{minipage}{0.48\textwidth}
\begin{algorithm}[H]
\SetAlgoLined
\BlankLine
Download sliced weight files from the master node\;
\BlankLine\BlankLine
Start memory scheduler\;
\BlankLine
\While{generated sequence not finished}{
    \BlankLine\BlankLine\BlankLine\BlankLine\BlankLine\BlankLine\BlankLine
    \textit{Broadcast:} Receive embeddings, causal mask, and cache position from master\;
    \ForEach{decoder layer $l$}{
        \textit{Attention:} Execute layernorm, self-attention, and store $(K_l^k, V_l^k)$ in KVCache $\mathcal{D}^k$\;
        \textit{Allreduce:} Aggregate attention outputs\;
        \textit{FFN:} Execute layernorm and FFN\;
        \textit{Allreduce:} Aggregate FFN outputs\;
    }
    \textit{Reduce:} Send final output to master\;
    \BlankLine\BlankLine\BlankLine\BlankLine\BlankLine
}
\caption{Worker (with rank $k$):}
\label{alg:worker}
\end{algorithm}
\end{minipage}
\end{table}

\subsection{Allreduce latency analysis}\label{sec:allreduce-latency-analysis}
Given the dynamic and heterogeneous nature of edge networks, we tested NetStorm \citep{li2024netstorm} as the communication backend, but unfortunately, it resulted in high token latency. After further validation, we confirmed that this latency was not due to network bandwidth, but due to link latency.

To analyze the impact of network bandwidth and link latency, we make the following assumption.
\begin{assumption}
Assume that the edge network adopts a physical topology as shown in Appendix \ref{appendix:klonet-testbed}, the network links have the same latency $\tau$, the allreduce algorithm follows a tree-based structure of depth 2 for aggregation, and each device has the same computing power.
\end{assumption}

The allreduce latency can be expressed as $t_{\mathtt{all\_reduce}}=2L(t_{\mathtt{data}}+t_{\mathtt{link}}+t_{\mathtt{barrier}}+t_{\mathtt{aggr}})$, where $L$ is the number of Transformer layers, $t_{\mathtt{data}}$ is the cumulative data transfer latency, $t_{\mathtt{link}}$ is the cumulative link latency, $t_{\mathtt{barrier}}$ is the cumulative barrier latency during aggregation, and $t_{\mathtt{aggr}}$ is the cumulative latency for aggregation calculation. Here we ignore $t_{\mathtt{aggr}}$ as it takes only 0.1 ms and thus negligible compared to other factors.

\begin{proposition}
The bottleneck in allreduce is not network bandwidth, but link latency.
\end{proposition}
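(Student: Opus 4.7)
The plan is to bound each of the three non-negligible contributions in $t_{\mathtt{all\_reduce}}=2L(t_{\mathtt{data}}+t_{\mathtt{link}}+t_{\mathtt{barrier}})$ separately, then compare their asymptotic orders under realistic edge-network parameters. Because the tree has fixed depth $2$ and each device has the same compute power, each allreduce round consists of at most two serial hops with symmetric roles, so I can write closed forms for $t_{\mathtt{link}}$ and $t_{\mathtt{data}}$ and reduce the proposition to a parameter comparison.

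First I would express the data-transfer latency per allreduce as $t_{\mathtt{data}}=\Theta(S/B)$, where $S$ is the payload size (a single hidden-state tensor per token) and $B$ is the available bandwidth. This term depends only on $B$ and $S$ and is independent of $\tau$; for Llama-scale hidden dimensions together with the $B\approx 178$ Mbps reported in Section~2, $S/B$ evaluates to a small fraction of a millisecond per round. Next I would express $t_{\mathtt{link}}$ in terms of $\tau$: by the equal-latency assumption and the depth-$2$ tree, each hop contributes exactly $\tau$, giving $t_{\mathtt{link}}=2\tau$ per allreduce, independent of $S$ and therefore unaffected by any bandwidth upgrade. The barrier term admits a similar lower bound: because the root must wait for the slowest child before aggregating, $t_{\mathtt{barrier}}$ is at least the stragglers' arrival jitter, which is itself $\Omega(\tau)$ once link delays are not perfectly synchronous.

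Combining these, $t_{\mathtt{link}}+t_{\mathtt{barrier}}=\Omega(\tau)$ while $t_{\mathtt{data}}=O(S/B)$. Substituting realistic values ($\tau$ of the order of tens of milliseconds on commodity LAN/Wi-Fi versus $S/B$ of the order of tenths of a millisecond) shows the latency terms dominate by one to two orders of magnitude per round. Multiplying by $2L$ over tens of Transformer layers only amplifies the gap, so increasing $B$ cannot shrink the dominant cost, whereas reducing $\tau$ would; this is exactly the sense in which link latency, not bandwidth, is the bottleneck. I would conclude by cross-checking the ratio against the profiling numbers already measured for Llama 3.1-8B in Section~2 to confirm that the asymptotics match the empirical ordering.

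The main obstacle I anticipate is handling $t_{\mathtt{barrier}}$ rigorously: under a perfectly synchronous, equal-$\tau$ idealization one could argue $t_{\mathtt{barrier}}=0$, which would weaken the statement to ``link latency dominates bandwidth-limited transfer'' rather than ``bandwidth is not the bottleneck at all.'' I would address this either by folding $t_{\mathtt{barrier}}$ into the link-latency accumulation along the tree, which already suffices because $2\tau\gg S/B$ for the small payloads characteristic of LLM allreduce, or by explicitly invoking realistic link-jitter so that straggler effects force $t_{\mathtt{barrier}}=\Omega(\tau)$ in expectation. Either route delivers the conclusion without requiring detailed modeling of the network stack.
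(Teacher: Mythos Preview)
Your decomposition and term-by-term comparison is exactly the paper's approach. Two small corrections worth noting. First, under Assumption~1 (equal $\tau$, equal compute) the paper argues $t_{\mathtt{barrier}}\approx 0$ directly from symmetry, so your claim that $t_{\mathtt{barrier}}=\Omega(\tau)$ does not follow from the stated hypotheses; your anticipated fallback---drop the barrier term and rely on $t_{\mathtt{link}}\gg t_{\mathtt{data}}$ alone---is precisely what the paper does. Second, the paper counts hops through the physical router topology of the testbed rather than just the logical tree depth, getting $16\tau$ on a representative path instead of your $2\tau$; it then leans on two experiments (an allreduce with a 4-byte payload to isolate $t_{\mathtt{link}}$, and a bandwidth sweep showing no improvement) in place of asymptotic notation.
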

\begin{proof}
The data transfer latency $t_{\mathtt{data}}=2\sum_{\{i \rightarrow j\} \in \mathcal{P}_h} \frac{32|\mH|}{B_{ij}}$ depends on the size $32|\mH|$ of the data being transmitted and the bandwidth $B_{ij}$ of the links in the path $\mathcal{P}_h$, here $\mathcal{P}_h$ is an index sequence from device $h$ to the master device. For example, in the case of Llama 2-70B with a hidden size $|\mH|=8192$ and a network bandwidth of 300 Mbps, the data transfer latency is only $t_{\mathtt{data}}=3.4$ ms, which is negligible compared to other latencies. In addition, experiment results in Figure \ref{fig:scaling-performance} show that increasing the network bandwidth does not significantly reduce token latency, further confirming that data transfer and network bandwidth is not the bottleneck.

The link latency $t_{\mathtt{link}}$, which is often neglected, emerges as the main issue. For example, the path from device $h_2$ to $h_1$ via $h_8$ follows the route $h_2 \rightarrow r_2 \rightarrow r_9 \rightarrow r_8 \rightarrow h_8 \rightarrow r_8 \rightarrow r_9 \rightarrow r_1 \rightarrow h_1$, resulting in a total link latency of $16\tau$, where $\tau$ is the per-hop link latency. To isolate the impact of

\begin{figure}[H]
    \begin{minipage}[t]{0.63\textwidth}
        \vspace{-4.25cm}
        link latency, we ran allreduce with only 4 bytes of data, excluding data transfer $t_{\mathtt{data}}$ and barrier latencies $t_{\mathtt{barrier}}$. The results, shown in Figure \ref{fig:per-link-allreduce-latency-impact}, demonstrate that the per-link latency $\tau$ significantly impacts allreduce latency. This indicates that an inefficient allreduce algorithm, where multiple hops are required (e.g., ring \citep{ye2024galaxy, shoeybi2019megatron} or tree-based \citep{zhou2021tsengine,li2024netstorm} algorithms), will further amplifies this impact. For example, with the ring algorithm, allreduce requires 7 communication steps for reducescatter and 7 for allgather, resulting in a total link latency of $56\tau$, which is 3.5$\times$ higher than the tree-based setup. \\
        
        The barrier latency, $t_{\mathtt{barrier}}$, arises from synchronization
    \end{minipage}
    \hfill
    \begin{minipage}[t]{0.4\textwidth}
        \centering
        \includegraphics[width=0.73\textwidth]{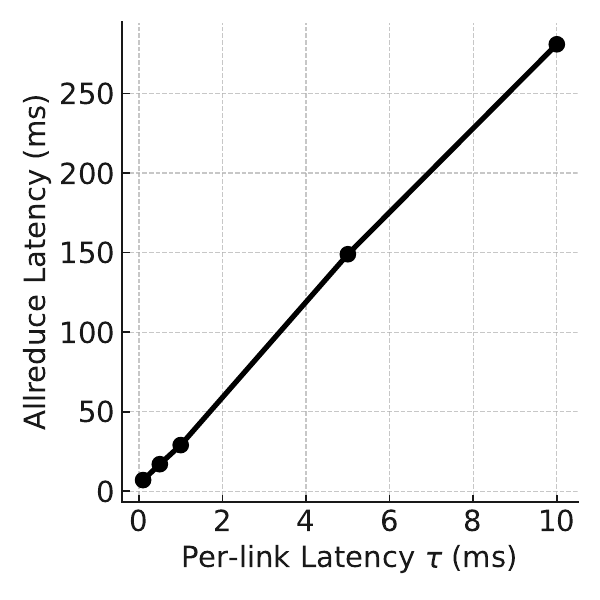}
        \caption{Impact of link latency $\tau$.}
        \label{fig:per-link-allreduce-latency-impact}
    \end{minipage}
\end{figure}
\vspace{-0.7cm}
delays during data aggregation. Given the assumption that all devices have equal computing power and network links have equal latencies, the barrier latency can be approximated as negligible:
\begin{equation}
t_{\mathtt{barrier}} = \max\{ \sum_{(i \rightarrow j) \in \mathcal{P}} \frac{32|\mH|}{B_{ij}}, \forall \mathcal{P} \} - \min\{ \sum_{(i \rightarrow j) \in \mathcal{P}} \frac{32|\mH|}{B_{ij}}, \forall \mathcal{P} \} \approx 0.
\end{equation}
Thus, link latency $t_{\mathtt{link}}$ emerges as the key factor in allreduce latency. 
\end{proof}

\begin{proposition}
The star-based allreduce is more effective for TPI-LLM in high-latency networks.
\end{proposition}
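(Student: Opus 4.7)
The plan is to directly count the per-hop link latencies incurred by a star-based allreduce, compare against the $16\tau$ tree-based and $56\tau$ ring-based figures derived in the previous proposition, and then argue that the remaining latency terms do not offset this advantage. Since the previous proposition already isolated $t_{\mathtt{link}}$ as the dominant component of $t_{\mathtt{all\_reduce}}$, reducing the hop count immediately implies a reduction in total token latency.

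First I would formalize the primitive: each worker $h$ sends its partial tensor along the direct physical path $\mathcal{P}_h$ to the master, the master sums the incoming tensors, and then broadcasts the reduced tensor back along the reverse paths. Because all workers transmit in parallel, the link-latency contribution of one allreduce is $2\cdot\max_h |\mathcal{P}_h|\cdot\tau$, rather than a sum over workers or a cumulative number of communication rounds as in ring and tree schemes.

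Next I would instantiate on the Klonet topology used in the previous proposition. Each worker $h_k$ reaches master $h_1$ via a host--router--router--router--host chain of length 4 (for instance $h_2\to r_2\to r_9\to r_1\to h_1$), so $\max_h |\mathcal{P}_h|=4$ and one star allreduce contributes $8\tau$ of link latency---half of the $16\tau$ tree figure and one-seventh of the $56\tau$ ring figure. Summing over $L$ Transformer blocks and two allreduces per block therefore yields strictly smaller $t_{\mathtt{link}}$ than either competitor, which by the previous proposition gives strictly smaller $t_{\mathtt{all\_reduce}}$.

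Finally I would rule out the standard objections to star schemes. The data-transfer cost is $t_{\mathtt{data}}\approx 3.4$ ms even at 300 Mbps, so any fan-in contention at the master's last hop is absorbed into $t_{\mathtt{data}}$ and remains negligible relative to $t_{\mathtt{link}}$; and under the equal-compute, equal-per-hop-latency assumption all worker tensors traverse equally many hops and arrive synchronously, so $t_{\mathtt{barrier}}\approx 0$ exactly as in Eq.~(3). The main obstacle I anticipate is this parallel-path claim: if a shared core router serializes simultaneous worker-to-master flows, $\max_h |\mathcal{P}_h|\cdot\tau$ undercounts the effective latency. I would handle it by observing that any such serialization inflates $t_{\mathtt{data}}$ rather than $t_{\mathtt{link}}$, so it costs at most a few additional milliseconds per step---far less than the $2\times$ and $7\times$ hop-count gains over tree and ring---and the star ranking is preserved.
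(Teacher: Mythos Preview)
Your argument is correct, but it differs in setup from the paper's appendix proof. You work directly on the 8-device Klonet topology, reuse the $16\tau$ and $56\tau$ figures already derived inside Proposition~1, and count the star hops as $2\cdot\max_h|\mathcal{P}_h|\cdot\tau=8\tau$; this is in fact exactly the informal justification given in the main text right after the proposition statement. The appendix proof instead builds a fresh \emph{minimal} instance---1 master, 2 workers, 1 router---and writes out full latency decompositions $t_{\mathtt{star}}=2(t_{\mathtt{data}}+t_{\mathtt{link}})+t_{\mathtt{barrier}}+t_{\mathtt{aggr}}$, $t_{\mathtt{tree}}=3t_{\mathtt{data}}+4t_{\mathtt{link}}+2t_{\mathtt{barrier}}+2t_{\mathtt{aggr}}$, $t_{\mathtt{ring}}=\tfrac{4}{3}t_{\mathtt{data}}+4t_{\mathtt{link}}+3t_{\mathtt{barrier}}+\tfrac{2}{3}t_{\mathtt{aggr}}$, then drops the small terms to get $2t_{\mathtt{link}}<4t_{\mathtt{link}}=4t_{\mathtt{link}}$. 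Your route is tighter to the actual testbed and reuses prior work economically; the paper's route is a self-contained toy model that makes the per-algorithm coefficients on each latency component explicit. Your extra paragraph on fan-in serialization at the core router is a point the appendix does not treat (the main text only gestures at it via ``avoids the single-point issue due to the small data size''), so that part is a genuine addition.
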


Despite past criticism, the star-based allreduce, where workers push data directly to the master for aggregation and pull the result back \citep{chen2015mxnet}, stands out as the best choice (see Appendix \ref{appendix:compare-allreduce-algs} for a detailed proof). It has minimal hops ($8$), lowest link latency ($8\tau$), zero intermediate barriers, and avoids the single-point issue due to the small data size (256 KB per device), making it the preferred allreduce algorithm for TPI-LLM.

\subsection{Sliding Window Memory Scheduling}
Quantizations like FP16 and INT8 are common for NVIDIA GPUs with CUDA support, but most edge devices lack CUDA and prefer full precision for faster computation due to their general-purpose CPU design. As a result, while tensor parallelism helps distribute memory costs across devices, the memory load remains high. Thus, memory scheduling is still required to manage these loads.

We introduce a memory scheduler, which manages memory by dynamically loading and unloading model weights during inference, ensuring that only the necessary parts are kept in memory (see Appendix \ref{appendix:use-memory-scheduler} for potential use). The memory scheduler operates on a daemon thread to asynchronously handle memory operations. To maintain the peak memory footprint, it uses a sliding window and preloads weights for upcoming layers while unloading those that have been processed.

As mentioned in Section \ref{sec:system-overview}, each Transformer layer is divided into attention computing, allreduce, FFN computing, and allreduce. For simplicity, in Figure \ref{fig:mem-scheduler}, we assume the delays for these stages and weight loading to be equal. In each time slot, the memory scheduler asynchronously loads weights for either an attention or FFN block. By overlapping weight loading with ongoing computations and communications, it hides the I/O latency associated with loading weights from disk. For example, in Figure \ref{fig:mem-scheduler}, the memory scheduler loads one more block during each allreduce until the sliding window reaches its size. As computations and communications proceed, we ensure weights are always ready when needed, allowing for seamless inference without computational stalls.

\begin{figure}[h]
    \centering
    \includegraphics[width=\textwidth]{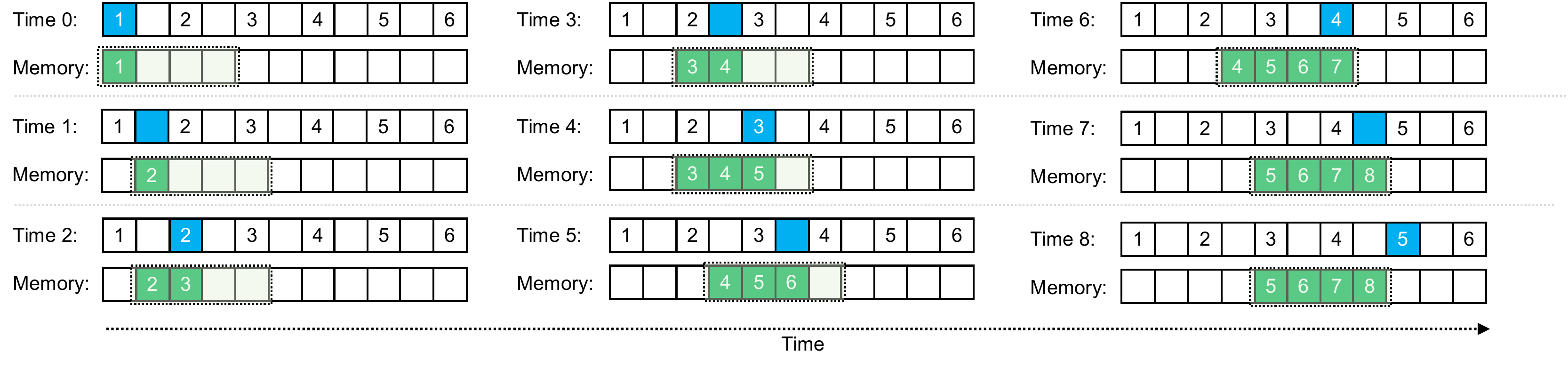}
    \caption{An illustration of the sliding window memory scheduling. Blue blocks indicate the blocks currently executed, with numbered blocks for attention or FFN computing and unnumbered blocks for allreduce communication. Green blocks indicate loaded model weights. The dashed box represents the sliding window, with size 4 in this case.}
    \label{fig:mem-scheduler}
\end{figure}

Next, we provide the conditions for this mechanism to reach a steady state, under which all required weights are loaded before computation starts.
\begin{proposition}[\textbf{Loose Steady Condition}]
The memory scheduler reaches a steady state when the following condition is met:
\begin{equation}
t_{\mathtt{attn}}+t_{\mathtt{ffn}}+2 t_{\mathtt{all\_reduce}} \ge \tau_{\mathtt{ffn}} + \tau_{\mathtt{attn}},
\end{equation}
and one of the following conditions is met:
\begin{align}
l\cdot t_{\mathtt{attn}}+(l-1)\cdot t_{\mathtt{ffn}}+(2l-1)\cdot t_{\mathtt{all\_reduce}}&\ge l\cdot \tau_{\mathtt{ffn}}+(l-1)\cdot \tau_{\mathtt{attn}},~\forall l\in\{1,\cdots,L\}, \\
(l-1)\cdot t_{\mathtt{attn}}+l\cdot t_{\mathtt{ffn}}+(2l-1)\cdot t_{\mathtt{all\_reduce}}&\ge (l-1)\cdot \tau_{\mathtt{ffn}}+l\cdot \tau_{\mathtt{attn}},~\forall l\in\{1,\cdots,L\},
\end{align}
where $t_{\mathtt{attn}}$ and $t_{\mathtt{ffn}}$ are times required for attention and FFN computation, $t_{\mathtt{all\_reduce}}$ is the allreduce latency, $\tau_{\mathtt{ffn}}$ and $\tau_{\mathtt{attn}}$ are times required to load attention and FFN weights, and $L$ is the number of Transformer layers.
\end{proposition}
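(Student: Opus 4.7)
The plan is to reduce this to a deterministic single-machine scheduling question: the daemon loader processes $2L$ weight blocks (one attention and one FFN block per layer) with durations $\tau_{\mathtt{attn}}$ or $\tau_{\mathtt{ffn}}$, and I want to decide whether, under the stated inequalities, a loading order exists that completes every block before its deadline set by the compute timeline. Since loads run sequentially on a daemon thread, the completion time of any load equals the cumulative sum of durations of all loads scheduled at or before it; the only degree of freedom is the loading order.

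First I would write down the deadlines explicitly: layer $l$ needs its attention weights resident by $C_l^{\mathtt{attn}}=(l-1)(t_{\mathtt{attn}}+t_{\mathtt{ffn}}+2t_{\mathtt{all\_reduce}})$ and its FFN weights by $C_l^{\mathtt{ffn}}=C_l^{\mathtt{attn}}+t_{\mathtt{attn}}+t_{\mathtt{all\_reduce}}$. The first attention block is handled by a one-time preload of cost $\tau_{\mathtt{attn}}$ before the compute pipeline begins (otherwise attn$_1$ could never be ready at time zero), after which the loader services the remaining $2L-1$ blocks in some causal order. Next I would identify two canonical orderings that match the two per-$l$ right-hand sides in the proposition. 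In ordering~(A), the loader alternates as ffn$_1,$ attn$_2,$ ffn$_2,$ attn$_3,\ldots$, so the completion time of ffn$_l$ equals $l\tau_{\mathtt{ffn}}+(l-1)\tau_{\mathtt{attn}}$, and requiring this to lie below $C_l^{\mathtt{ffn}}$ reproduces the first per-$l$ inequality exactly. Ordering~(B) swaps the parities of the attention and FFN loads and yields, by the same calculation, the second per-$l$ inequality. Thus each of the two per-$l$ conditions is sufficient for one canonical ordering to meet the layer-$l$ deadline, and the disjunction states that at least one of the two orderings remains feasible across every layer.

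The aggregate inequality $t_{\mathtt{attn}}+t_{\mathtt{ffn}}+2t_{\mathtt{all\_reduce}}\ge\tau_{\mathtt{attn}}+\tau_{\mathtt{ffn}}$ is the steady-state rate condition: per layer, the compute pipeline advances by the left-hand side while the loader advances by the right-hand side, so if the latter were larger the backlog would grow linearly in $l$ and eventually overrun any fixed deadline slack, no matter what the per-$l$ inequalities say. Assuming this rate condition together with one of the per-$l$ conditions, I would then induct on $l$: the induction hypothesis provides a non-negative buffer of preloaded weights at the end of layer $l-1$, the rate condition preserves this buffer across the interior of layer $l$, and the per-$l$ condition handles the boundary term at the start of ffn$_l$ or attn$_l$ depending on the chosen ordering.

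The main obstacle will be handling the disjunction cleanly. Pure sufficiency is the direct induction sketched above, but I also want to justify that no hybrid schedule that mixes the two orderings across layers relaxes the inequalities further, since otherwise the proposition would not be characterising the full feasible region. I expect this step to require an exchange argument: starting from any feasible schedule, perform local swaps of adjacent attention and FFN loads that do not increase any completion time, converging to one of the two canonical orderings. That reduces the general feasibility question to the two canonical cases already analysed, and closes the proof.
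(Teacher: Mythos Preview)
Your scheduling-with-deadlines framing is clean, and your analysis of ordering~(A) correctly recovers the first family of per-$l$ inequalities together with the aggregate rate condition. The gap is in your treatment of the second family. In the paper's system the daemon has \emph{no} freedom in the loading order: after the one-time preload of $\mW_{\mathtt{attn}}^1$, it always loads $\mW_{\mathtt{ffn}}^1,\mW_{\mathtt{attn}}^2,\mW_{\mathtt{ffn}}^2,\ldots$ in that fixed sequence, so there is no ``ordering~(B)'' to switch to. Concretely, with completion times measured from $t=0$ and deadlines $C_l^{\mathtt{attn}},C_l^{\mathtt{ffn}}$ as you wrote them, no permutation of the remaining $2L-1$ loads produces a binding constraint whose right-hand side is $(l-1)\tau_{\mathtt{ffn}}+l\,\tau_{\mathtt{attn}}$; the attn/ffn symmetry you invoke is broken because the compute pipeline always begins with attention and the single preloaded block is $\mW_{\mathtt{attn}}^1$.

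In the paper the disjunction does not come from two schedules but from a case split on the sign of $t_{\mathtt{attn}}+t_{\mathtt{all\_reduce}}-\tau_{\mathtt{ffn}}$. In Case~1 this quantity is nonnegative, the loader never stalls, and tracking the accumulated slack through time slots $3,5,7,\ldots$ yields exactly your first per-$l$ family plus the aggregate condition. In Case~2 it is negative, so the very first FFN block \emph{is} stalled; the stall resets the loader's credit to zero at the instant $\mW_{\mathtt{ffn}}^1$ arrives, and the non-blocking conditions for all later slots, now measured from this shifted origin where the pipeline effectively begins with an FFN block, produce the second per-$l$ family. Thus the two inequalities are not two feasible regions to be reconciled by an exchange argument but two mutually exclusive regimes of the \emph{same} fixed schedule, selected by a single scalar sign. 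Your exchange step is unnecessary, and your induction needs to be redone separately in each regime, carrying the $\max\{0,\cdot\}$ that encodes the possible stall at slot~3.
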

This condition is loose but a bit hard to assess, so we present a tighter, more intuitive condition.
\begin{proposition}[\textbf{Tight Steady Condition}]
$t_{\mathtt{attn}}+t_{\mathtt{all\_reduce}} \ge \tau_{\mathtt{ffn}}$~and~$t_{\mathtt{ffn}}+t_{\mathtt{all\_reduce}} \ge \tau_{\mathtt{attn}}$.
\end{proposition}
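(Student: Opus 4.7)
The plan is to show that the two tight inequalities, $t_{\mathtt{attn}}+t_{\mathtt{all\_reduce}} \ge \tau_{\mathtt{ffn}}$ and $t_{\mathtt{ffn}}+t_{\mathtt{all\_reduce}} \ge \tau_{\mathtt{attn}}$, together imply every inequality listed in the Loose Steady Condition; hence they are a sufficient (and more easily checked) criterion. The whole argument is really a bookkeeping exercise in nonnegative linear combinations, because the loose inequalities are themselves linear in $l$ and the tight inequalities bound a per-block load time by a per-block compute-plus-communication time.

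First I would dispose of the base inequality $t_{\mathtt{attn}}+t_{\mathtt{ffn}}+2 t_{\mathtt{all\_reduce}} \ge \tau_{\mathtt{ffn}} + \tau_{\mathtt{attn}}$ by simply adding the two tight inequalities term by term. This also highlights the template used in the rest of the proof: decompose each loose inequality as a nonnegative combination of the two tight ones.

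Next I would handle the two families of per-layer inequalities. The key observation is that the coefficient $2l-1$ of $t_{\mathtt{all\_reduce}}$ on the left splits naturally as $l+(l-1)$. Taking $l$ copies of the first tight inequality and $(l-1)$ copies of the second and adding yields
\begin{equation}
l(t_{\mathtt{attn}}+t_{\mathtt{all\_reduce}}) + (l-1)(t_{\mathtt{ffn}}+t_{\mathtt{all\_reduce}}) \ge l\,\tau_{\mathtt{ffn}} + (l-1)\,\tau_{\mathtt{attn}},
\end{equation}
which, after collecting the $t_{\mathtt{all\_reduce}}$ terms, is exactly the first per-$l$ inequality. Swapping the multipliers (take $(l-1)$ copies of the first and $l$ copies of the second) yields the second per-$l$ inequality in the same way. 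Since $l,\,l-1\ge 0$ for all $l\in\{1,\dots,L\}$, both inequalities hold uniformly in $l$, completing the implication.

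I do not expect a real obstacle here: no induction on $l$ is needed, and there is no scheduling argument to reconstruct, since the Loose Steady Condition is stated as a pure inequality and the tight condition already captures the single-block bound. The one place to be careful is the decomposition of the coefficient $2l-1$ and the asymmetric way it is paired with $\tau_{\mathtt{ffn}}$ and $\tau_{\mathtt{attn}}$ in the two families; I would verify the bookkeeping explicitly for each family before writing the final proof, to make sure no stray $t_{\mathtt{all\_reduce}}$ term is over- or under-counted.
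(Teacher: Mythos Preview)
Your proposal is correct. The forward argument by nonnegative linear combination is exactly right: adding the two tight inequalities gives the base inequality, and taking $l$ copies of the first with $(l{-}1)$ copies of the second (respectively $(l{-}1)$ and $l$) reproduces the two per-$l$ families verbatim, since the $t_{\mathtt{all\_reduce}}$ coefficients sum to $2l{-}1$ in both cases.

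The paper's own proof takes a different route. Rather than building the loose inequalities from the tight ones, it starts from the slack $\alpha$ of the first per-$l$ inequality, rewrites $l$ times the base inequality as $\alpha + (t_{\mathtt{ffn}}+t_{\mathtt{all\_reduce}}-\tau_{\mathtt{attn}})>0$, and then imposes $t_{\mathtt{ffn}}+t_{\mathtt{all\_reduce}}>\tau_{\mathtt{attn}}$ as a sufficient constraint to force $\alpha>0$; the second tight inequality is then extracted analogously by substituting back into the base inequality. In effect the paper works backward, discovering the tight conditions from the structure of the loose ones, whereas you work forward, verifying sufficiency by direct combination. Your approach is shorter and makes the implication transparent; the paper's approach has the flavour of a derivation that explains where the tight conditions come from, but its logical direction is harder to follow.
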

The proofs can be found in Appendices \ref{appendix:proof-of-proposition-2} and \ref{appendix:proof-of-proposition-3}. This conclusion is straightforward. If the previous block's computation and allreduce time cover the current block's weight loading time, the memory scheduler can fully hide the disk I/O latency. As an example, in Section \ref{sec:real-case-study}, we use 4 laptops with Llama 2-7B, setting $p_i=0.25$ and $w=4$. We measured $t_{\mathtt{attn}}=11$ ms, $t_{\mathtt{ffn}}=17$ ms, $t_{\mathtt{all\_reduce}}=14$ ms, $\tau_{\mathtt{attn}}=18$ ms, and $\tau_{\mathtt{ffn}}=30$ ms. While the tight steady condition is not met, the loose steady condition is met, allowing the memory scheduler to achieve steady state.

\begin{proposition}[\textbf{Peak Memory Footprint}]
If the memory scheduler reaches a steady state, the peak memory footprint of the master and worker can be expressed as
\begin{align}\label{eq:peak-mem-footprint}
M_{\mathtt{master}} &= \gamma \times
\begin{cases}
hv + h, & \text{if } w = 1 \\
2hv + h, & \text{if } w = 2 \\
2hv + h + \left\lfloor \frac{w-2}{2} \right\rfloor \left( 2(1+\frac{b}{a})h^2 p_i + h \right) + \left\lfloor \frac{w-1}{2} \right\rfloor (3hs p_i + h), & \text{if } w \ge 3
\end{cases} \\
M_{\mathtt{worker}} &= \gamma \times \left( \left\lfloor \frac{w}{2} \right\rfloor ( 2(1 + \frac{b}{a})h^2 p_i + h ) + \left\lfloor \frac{w+1}{2} \right\rfloor (3hs p_i + h) \right),
\end{align}
where $h$ is the hidden size, $v$ is the vocabulary size, $a$ is the number of attention heads, $b$ is the number of key-value heads, $s$ is the intermediate size, $p_i$ is the proportion of parameters handled by device $i$, $w$ is the memory window size, and $\gamma$ is a memory scaling factor.
\end{proposition}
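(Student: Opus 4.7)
The strategy is to decompose the peak memory footprint into three contributions---per-block weight counts, constant master-only components, and the scheduler-dependent sliding-window contents---and then multiply the total parameter count by the scaling factor $\gamma$, which absorbs the non-weight memory (intermediate activations, the local KVCache slice, communication buffers, and runtime overhead).

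First I would tally the parameter count of each Transformer building block. Under grouped-query attention, the query and output projections contribute $2h^2$ parameters and the key and value projections contribute $2(b/a)h^2$, so an attention block has $2(1+b/a)h^2$ weight parameters plus $h$ for its preceding layernorm; the FFN block has $3hs$ weight parameters from the gate, up, and down projections plus $h$ for its own layernorm. Since the weight matrices are tensor-parallel-sharded by the fraction $p_i$ while layernorms are replicated, the per-device contribution of each resident attention or FFN block is $2(1+b/a)h^2 p_i + h$ and $3hs p_i + h$ respectively, which are exactly the quantities multiplied by the floor functions in the proposition.

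Next I would isolate the master-only fixed pieces. As noted at the end of Section \ref{sec:system-overview}, only the master performs tokenization and sampling, so the input embedding and task head matrices contribute $2hv$ parameters there, and the model's final layernorm adds a further $h$. I would argue that these two $hv$ matrices effectively occupy two slots of the master's sliding window, leaving $w - 2$ slots for Transformer blocks; this produces the shift $w \mapsto w-2$ in the master formula and the special cases $w = 1$ (only one $hv$ matrix fits alongside one layernorm) and $w = 2$ (both $hv$ fit but no Transformer block does yet).

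Finally I would count the window composition at peak by induction on the scheduler's step index. Because each layer issues the fixed cycle attention $\to$ allreduce $\to$ FFN $\to$ allreduce and the scheduler preloads one block per slot while unloading the one just consumed, the steady-state contents at the moment that realises the maximum are of the form $\{\mathtt{FFN}_l, \mathtt{attn}_{l+1}, \mathtt{FFN}_{l+1}, \mathtt{attn}_{l+2}, \ldots\}$ truncated to size $w$, with FFN appearing first since $3hs > 2(1+b/a)h^2$ for every architecture under consideration. Counting each block type in this alternation gives $\lfloor w/2 \rfloor$ attention and $\lfloor (w+1)/2 \rfloor$ FFN blocks for the worker, and the shifted counts $\lfloor (w-2)/2 \rfloor$ and $\lfloor (w-1)/2 \rfloor$ for the master. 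The hard part will be the inductive establishment of this steady-state invariant: I would handle it by explicitly tracing the warm-up transient, showing that each completed $(\mathtt{attn}, \mathtt{FFN})$ pair shifts the window by one step without altering its type counts, and confirming that during warm-up the window is strictly smaller than $w$ so the peak indeed occurs in the steady phase. Once this invariant is secured, substituting the per-block counts, summing the layernorms, handling the two master edge cases, and multiplying by $\gamma$ are all routine.
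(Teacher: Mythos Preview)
Your proposal is correct and follows essentially the same decomposition as the paper: tabulate the per-block parameter counts (attention $=2(1+b/a)h^2p_i+h$, FFN $=3hsp_i+h$, preprocess $=hv$, postprocess $=hv+h$), identify the contents of the sliding window at its heaviest moment, sum, and scale by $\gamma$. The paper's own argument is considerably lighter than yours: it simply lists the block sizes in a table and then reads the window composition directly off an illustrative figure of the scheduler at peak, without your inductive justification for the FFN-first alternation or the explicit reasoning that the embedding and task-head matrices consume two master slots. Your more careful counting of why $\lfloor (w+1)/2\rfloor$ FFN blocks appear (because $3hs>2(1+b/a)h^2$ makes the FFN-leading snapshot the maximiser) is a genuine addition over the paper, which leaves that to the picture.
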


The proof can be found in Appendix \ref{appendix:peak-mem-footprint-analysis}. However, if a slow disk I/O disrupts the steady state, the memory scheduler will retain some FFN blocks in memory to reduce disk access frequency.
\begin{proposition}[\textbf{Loose Steady Condition with Block Retention}]
Let the memory scheduler retain one FFN block in memory every $T$ FFN blocks, the condition to reach a steady state is then
\begin{align}
l\cdot t_{\mathtt{attn}}+l\cdot t_{\mathtt{ffn}}+2l\cdot t_{\mathtt{all\_reduce}} &\ge (l-\left\lceil \frac{l}{T} \right\rceil)\cdot \tau_{\mathtt{ffn}} + l\cdot\tau_{\mathtt{attn}}, \\
l\cdot t_{\mathtt{attn}}+(l-1)\cdot t_{\mathtt{ffn}}+(2l-1)\cdot t_{\mathtt{all\_reduce}}&\ge (l-\left\lceil \frac{l}{T} \right\rceil)\cdot \tau_{\mathtt{ffn}}+(l-1)\cdot \tau_{\mathtt{attn}}.
\end{align}
\end{proposition}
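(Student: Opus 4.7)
The plan is to replay the work-conservation argument behind Proposition 2 (the plain Loose Steady Condition), with only one local change: among the first $l$ FFN blocks consumed by the main thread, $\lceil l/T \rceil$ are retained in memory and therefore do not need to be re-read from disk, so the scheduler only has to account for $l-\lceil l/T \rceil$ FFN loads instead of $l$. The attention-load count is unaffected by retention.

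First I would restate the underlying invariant used in the proof of Proposition 2. Because the memory scheduler runs on a single daemon thread that issues disk reads sequentially, being in steady state is equivalent to asking that, at every moment a block's weights are about to be consumed by the compute thread, the cumulative load time so far (skipping retained blocks) is at most the cumulative compute-plus-allreduce time elapsed on the main thread. I would then verify this invariant at the same two families of deadlines used for Proposition 2, namely the start of FFN block $l$ and the point just after the allreduce trailing FFN block $l$.

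For the second displayed inequality, I would instantiate the invariant at the instant just before FFN block $l$ begins. By then the main thread has performed $l$ attention computations, $l-1$ FFN computations, and $2l-1$ allreduces, which is exactly the stated LHS. The scheduler must by the same instant have loaded attention blocks $2,\ldots,l$ (attention block $1$ being preloaded before inference) and FFN blocks $1,\ldots,l$ minus the $\lceil l/T \rceil$ retained ones, which is exactly the stated RHS $(l-\lceil l/T \rceil)\tau_{\mathtt{ffn}}+(l-1)\tau_{\mathtt{attn}}$. For the first displayed inequality I would move the deadline half a cycle later, to the point immediately after the allreduce that trails FFN block $l$: the scheduler has additionally prefetched attention block $l+1$, raising the attention-load count from $l-1$ to $l$, while the FFN count is unchanged; the LHS correspondingly picks up one extra $t_{\mathtt{ffn}}$ and one extra $t_{\mathtt{all\_reduce}}$, yielding the first inequality.

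The main obstacle I anticipate is arguing that these two deadlines per layer are the only binding ones, so that the two displayed conditions together suffice for steady state at every moment. I would handle this by a monotonicity observation: between two consecutive deadlines the LHS only accumulates extra compute and allreduce terms while the RHS is unchanged (or even decreases when a retained FFN block falls in the interval), so the invariant propagates. A short induction over $l \in \{1,\ldots,L\}$ then closes the argument: assuming both inequalities for all $l' \le l$ gives that every attention and FFN weight required through layer $l+1$ is in memory when its consumer starts, so the scheduler remains in steady state through the next layer.
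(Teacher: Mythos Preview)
Your proposal is correct and follows essentially the same route as the paper: both replay the time-slot accounting of the plain Loose Steady Condition and simply drop the $\lceil l/T\rceil$ retained FFN loads from the right-hand side, arriving at the two displayed inequalities via the same two families of deadlines (before FFN $l$ and before attention $l{+}1$). Your monotonicity/induction wrapper to argue these deadlines are the only binding ones is a nice bit of extra rigor the paper omits; note that in the paper's numbering the plain Loose Steady Condition is Proposition~3, not Proposition~2.
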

The proof can be found in Appendix \ref{appendix:proof-of-proposition-6}. By setting an appropriate $T$, idle memory can help the scheduler reach a steady state, thus achieving a tradeoff between memory use and inference speed.

\section{Experiments}\label{sec:exp}
\textit{Prototype and Testbed.} We implemented the prototype of TPI-LLM\footnote{Open available at: \url{https://anonymous.4open.science/r/tpi-llm}.} with 3K LoC using PyTorch and Transformers to provide flexible support for various sizes and versions of pretrained LLMs. Our testbed, illustrated in Appendix \ref{appendix:klonet-testbed}, was built upon Klonet \citep{ma2024klonet} to create an edge network environment, emulating realistic conditions with configurable properties like network topology, bandwidth, and latency. By default, 8 edge devices were emulated on 2 Intel Xeon Gold 5220R CPUs, each limited to 8 logical cores, 8 GB of memory, and 4 GB of swap. Network bandwidth between devices was set to 300 Mbps with a 1 ms latency.

\textit{Models.} The inference speed of TPI-LLM is significantly affected by the model architecture. Deeper layers, more parameters, larger hidden sizes, and more attention heads increase the computational latency. Additionally, deeper layers result in more allreduce communications, and a larger hidden size leads to greater traffic. We tested with various models of different sizes, including Llama 2-3B/7B/13B/70B, Llama 3-8B/70B, and Yi-34B. See Appendix \ref{appendix:model-config} for their configuration details.

\subsection{Overview of tpi-llm performance}
\textbf{Fit 70B LLMs into edge devices and run in high efficiency.} We tested the performance of TPI-LLM with a focus on 3 key metrics: time-to-first-token (TTFT), token latency, and peak memory footprint per device. The memory window size is set to 2 by default. As shown in Table \ref{tab:tpi-llm-performance-metrics}, without the memory scheduler, the full weights are loaded into the memory at once. Despite that these weights have been distributed across multiple devices, the memory is still insufficient for larger models like Yi-34B and Llama 2/3/3.1-70B. Instead, enabling our memory scheduler significantly reduces the peak memory footprint, allowing larger models to run efficiently. For example, the Llama 2-70B model requires just 3.1 GB of memory per device, and the Llama 3.1-70B model fits within device limits. The results are summarized in Table \ref{tab:tpi-llm-performance-metrics}.

\begin{table}[h]
    \centering
    \small
    \caption{The TTFT, token latency, and peak memory footprint per device of TPI-LLM.}
    \label{tab:tpi-llm-performance-metrics}
    \begin{tabular}{l|ccc|ccc}
        \hline
        \multirow{2}{*}{\textbf{Model (FP32)}} & \multicolumn{3}{c|}{\textbf{Memory Scheduler Disabled}} & \multicolumn{3}{c}{\textbf{Memory Scheduler Enabled}} \\
        \cline{2-7}
        & \textbf{TTFT} & \textbf{Latency} & \textbf{Memory} & \textbf{TTFT} & \textbf{Latency} & \textbf{Memory} \\
        \hline\hline
        Llama 2-3B & 2.3 s & 1.0 s/token & 2.8 GB & 2.0 s & 1.9 s/token & 1.4 GB \\
        Llama 2-7B & 3.1 s & 1.2 s/token & 4.5 GB & 3.0 s & 2.6 s/token & 1.7 GB \\
        Llama 2-13B & 5.1 s & 1.9 s/token & 8.1 GB & 5.8 s & 2.9 s/token & 2.1 GB \\
        Llama 2-70B & OOM  & OOM & 34.9 GB & 29.4 s & 26.1 s/token & 3.1 GB \\
        \hline
        Llama 3.1-8B & 4.5 s & 1.5 s/token & 8.5 GB & 4.5 s & 4.3 s/token & 5.4 GB \\
        Llama 3.1-70B & OOM & OOM &  42.3 GB & 32.9 s & 29.9 s/token & 11.3 GB \\
        \hline
        Yi-34B & OOM & OOM & 20.4 GB & 15.7 s & 13.7 s/token & 4.9 GB \\
        \hline
    \end{tabular}
\end{table}

\begin{table}[h]
    \centering
    \small
    \caption{Peak memory footprint per device with the memory window size set to 2.}
    \label{tab:memory-footprint-per-node-window-size-2}
    \begin{tabular}{l|cccc|cccc}
        \hline
        & \multicolumn{4}{c|}{\textbf{Memory Scheduler Disabled (GB)}} & \multicolumn{4}{c}{\textbf{Memory Scheduler Enabled (GB)}} \\
        \hline
        \textbf{Model (FP32)} & $N=2$ & $N=4$ & $N=6$ & $N=8$ & $N=2$ & $N=4$ & $N=6$ & $N=8$ \\
        \hline\hline
        Llama 2-3B & 7.3 & 4.3 & 3.2 & 2.8 & 1.5 & 1.4 & 1.4 & 1.4 \\
        Llama 2-7B & 13.7 & 7.7 & 5.5 & 4.5 & 2.0 & 1.8 & 1.7 & 1.7 \\
        Llama 2-13B & 25.7 & 13.9 & 9.8 & 8.1 & 2.3 & 2.2 & 2.2 & 2.1 \\
        Llama 2-70B & 130 & 66.6 & 46.6 & 34.9 & 3.7 & 3.3 & 3.3 & 3.1 \\
        \hline
        Llama 3.1-8B & 18.4 & 11.8 & 9.4 & 8.5 & 5.9 & 5.6 & 5.5 & 5.4 \\
        Llama 3.1-70B & 137.7 & 74.0 & 51.1 & 42.3 & 10.8 & 10.5 & 11.4 & 11.3 \\
        \hline
        Yi-34B & 67 & 36.4 & 23.9 & 20.4 & 5.0 & 5.0 & 5.0 & 4.9 \\
        \hline
    \end{tabular}
\end{table}

\textbf{No need for dozens of devices, one or two are enough to run 70B models.} We used 8 devices by default, but can fewer devices run 70B-scale models? Table \ref{tab:memory-footprint-per-node-window-size-2} gives detailed peak memory footprints with varying number of devices. Without the memory scheduler, full weights are loaded onto the devices, and with fewer devices, the memory load increases. For instance, using only 2 devices limits users to smaller models, like those between 3B and 7B. However, with the memory scheduler enabled, only a few layers' weights are loaded and distributed across devices. This allows even larger models, such as 70B, to run smoothly on just 2 devices. Appendix \ref{appendix:peak-mem-window-4} shows the case with a memory window size of 4, which requires slightly more memory but faster speed. The peak memory footprint in TPI-LLM is primarily determined by the product of vocabulary size and hidden size, which is detailed in equation (\ref{eq:peak-mem-footprint}) and can be further reduced in our future work.

\subsection{Scaling over varying edge conditions}\label{sec:scaling-over-condition}
\textbf{Computation remains the bottleneck, not network bandwidth.} In this experiment, we examined the token latency of TPI-LLM under different edge conditions, the results are shown in Figure \ref{fig:scaling-performance}. As expected, increasing the number of devices reduces the computing load on each device, significantly lowering token latency, and more CPU cores also contribute to a reduced latency. Instead, a limited network bandwidth was no longer a bottleneck, boosting it from 300 Mbps to 1 Gbps had little effect on latency due to the tiny data size (only 256 KB) during each allreduce. Thus, the main bottleneck remains in the computation, which our future work should focus on.

\begin{figure}[t]
    \centering
    \includegraphics[width=\textwidth]{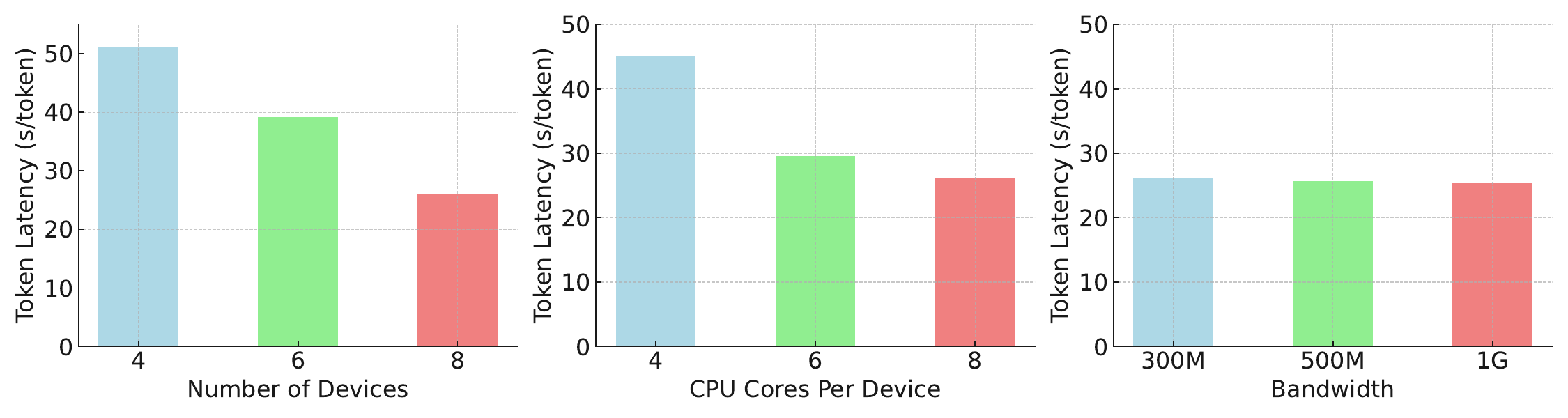}
    \caption{Token latency over varying number of devices, CPU cores, and network bandwidth on Llama 2-70B.}
    \label{fig:scaling-performance}
\end{figure}

\subsection{Comparison with benchmarks}

\begin{figure}[t]
    \centering
    \includegraphics[width=.9\textwidth]{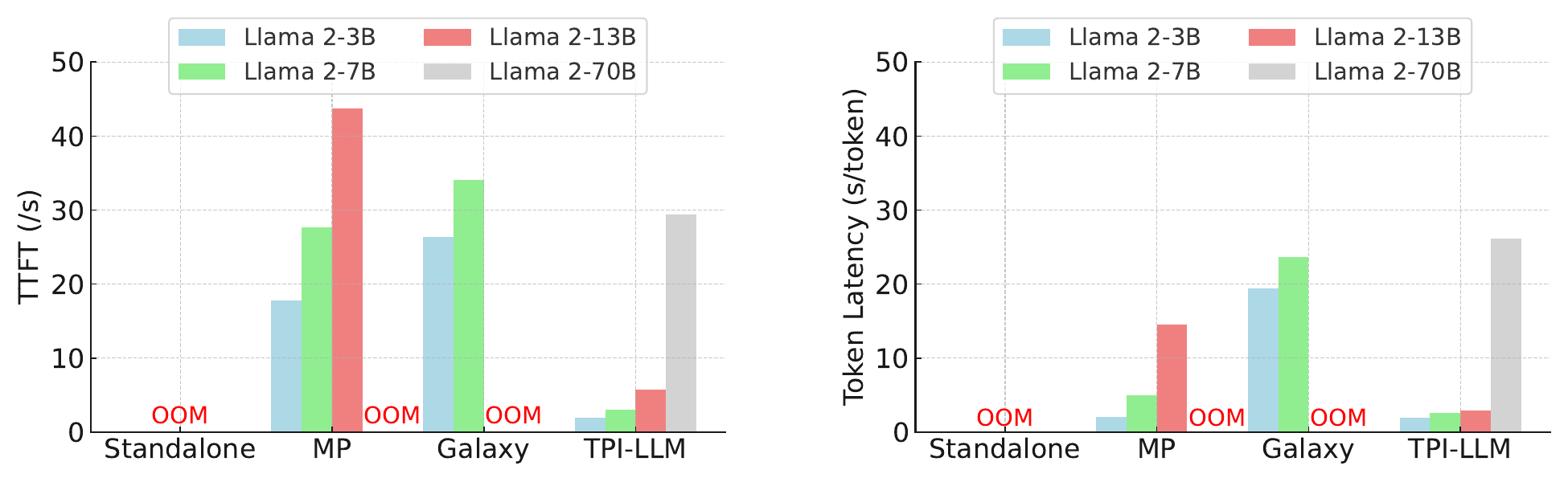}
    \caption{Comparison of TPI-LLM with three benchmarks.}
    \label{fig:performance-comparison}
\end{figure}

We compared the TPI-LLM with 3 benchmarks: (a) \textit{Standalone:} LLM inference is executed only on a single edge device using Transformers \citep{wolf2020transformers}. (b) \textit{Model Parallelism (MP):} Since only one user is served at a time, the pipeline parallelism \citep{zhang2024edgeshard, mei2024helix, borzunov2024distributed} degrades to the model parallelism, where different layer sequences are distributed across multiple devices. Each device computes its layers and passes the result to the next device until the entire inference is complete. (c) \textit{Galaxy} \citep{ye2024galaxy} combines tensor and sequence parallelism and overlaps communication and computation to accelerate inference. They all run in FP32 mode.

\textbf{Run larger models with lower latency and memory usage.} As shown in Figure \ref{fig:performance-comparison}, a limited memory on a single device makes it challenging to run even 3B models in a standalone mode. MP addresses this by the collaboration of 8 devices, allowing models up to 13B, but suffers from high latency due to pipeline bubbles. Galaxy tries to reduce such latency by combining tensor and sequence parallelism. However, in Section \ref{sec:allreduce-latency-analysis}, we concluded that the network bandwidth was no longer the issue, and the real problem is the link latency. Galaxy's use of a ring algorithm for reducescatter and allgather forces each link to be traversed at least 14 times. This causes high link latency and outweighs the benefits of parallel computing, ultimately resulting in a higher token latency than MP. In contrast, TPI-LLM adopts a star-based allreduce algorithm, minimizing hops and cumulative link latency. Combined with the blocking-free memory scheduler, TPI-LLM delivers significantly lower token latency and memory footprint, even with larger 70B models.

\subsection{Real case study}\label{sec:real-case-study}
In this study, we used 4 laptops with different CPU architectures and memory capacities, connected via a local Wi-Fi router. The testbed and configurations are detailed in Appendix \ref{appendix:laptop-config}. Macbook Pro was used by default. Due to the lack of CUDA, all computations were performed in full precision. As shown in Table \ref{table:real-exp-4}, Transformers loaded the entire model into the CPU memory, and when memory was insufficient, the operating system offloaded data to the swap. This frequent swap exchange significantly increased TTFT and token latency, even for smaller 3B models. As the model size grows, the swap space overflowed, finally leading to OOM errors. As a more efficient alternative, Accelerate \citep{accelerate} instantly loads layer weights only when required for the computation and reduces unnecessary data I/O. While it speeds up inference, due to implementation flaws on disk offloading, it still requires loading full weights before splitting and offloading them to disk. This results in OOM errors when the model size reaches 13B.

\textbf{TPI-LLM stands out in TTFT, token latency, and model size.} Our memory scheduler (Transformers+MS) outperforms Transformers and Accelerate in both TTFT and token latency across all model sizes. This is because our memory scheduler employs a sliding window mechanism, where a daemon thread asynchronously preloads the weights needed for upcoming computations. By overlapping data I/O with computations and communications, the scheduler avoids delays caused by disk I/O blocks, ensuring smoother and faster inference. To further speed up inference, we integrate the computing power of 4 laptops to serve TPI-LLM. By distributing the computational load across 4 laptops, the reduction in computing time far exceeds communication delays, so both TTFT and token latency are further reduced. The results from using 3 laptops are shown in Appendix \ref{appendix:case-study-3}, indicating a slightly higher latency due to reduced parallelism.

\begin{table}[t]
\caption{Comparison of Transformers, Accelerate, Transformers+MS, and TPI-LLM on 4 laptops.}
\centering
\small
\begin{tabular}{l|cc|cc|cc|cc}
\hline
\multirow{2}{*}{\textbf{Model (FP32)}} & \multicolumn{2}{c|}{\textbf{Transformers}} & \multicolumn{2}{c|}{\textbf{Accelerate}} & \multicolumn{2}{c|}{\textbf{Transformers + MS}} & \multicolumn{2}{c}{\textbf{TPI-LLM}} \\
\cline{2-9}
 & \makecell{\textbf{TTFT}\\\textbf{(s)}} & \makecell{\textbf{Latency}\\\textbf{(s/token)}} & \makecell{\textbf{TTFT}\\\textbf{(s)}} & \makecell{\textbf{Latency}\\\textbf{(s/token)}} & \makecell{\textbf{TTFT}\\\textbf{(s)}} & \makecell{\textbf{Latency}\\\textbf{(s/token)}} & \makecell{\textbf{TTFT}\\\textbf{(s)}} & \makecell{\textbf{Latency}\\\textbf{(s/token)}} \\
\hline\hline
Llama 2-3B & 61 & 30 & 24 & 16 & 4 & 3 & \textbf{2.5} & \textbf{2} \\
Llama 2-7B & 115 & 56 & 30 & 26 & 13 & 8 & \textbf{6} & \textbf{5} \\
Llama 2-13B & OOM & OOM & OOM & OOM & 22 & 18 & \textbf{10} & \textbf{9} \\
\hline
Llama 3.1-8B & 133 & 65 & 37 & 31 & 20 & 12 & \textbf{11} & \textbf{8} \\
\hline
Yi-34B & OOM & OOM & OOM & OOM & 185 & 55 & \textbf{33} & \textbf{29} \\
\hline
\end{tabular}
\label{table:real-exp-4}
\end{table}

\section{Conclusion}
In this paper, we concluded that tensor parallelism can be more effective than pipeline parallelism on low-resource devices, and presented a compute- and memory-efficient tensor parallel inference system, named TPI-LLM, to serve 70B-scale LLMs. TPI-LLM is designed with user prompt and generated sequence privacy in mind, by keeping sensitive raw data local in the users' devices. It leverages a sliding window memory scheduler to dynamically manage layer weights during inference with disk I/O latency overlapped by onging computations and communications, allowing larger models to run smoothly on devices with very limited memory. Our analysis showed that link latency, not bandwidth, emerges as the main issue, so TPI-LLM implements a star-based allreduce algorithm, rather than the commonly used ring- and tree-based algorithms. Through extensive experiments on emulated and real testbeds, TPI-LLM demonstrated significantly lower TTFT, token latency, and peak memory footprint compared to Transformers, Accelerate, Galaxy, and enabled serving larger-scale LLMs such as Yi-34B and Llama 2/3/3.1-70B on low-memory devices.



\section*{Reproducibility}
We have made efforts to ensure reproducibility by providing the source code at \url{https://anonymous.4open.science/r/tpi-llm}, with a detailed README for guidance included. To ease the use, a prebuilt Docker image is also provided. Key experimental setups are given in Section \ref{sec:exp} of the paper.

\bibliography{iclr2021_conference}
\bibliographystyle{iclr2021_conference}

\newpage
\appendix
\section{Appendix}



\subsection{Proof of proposition 2}\label{appendix:compare-allreduce-algs}
In conventional data parallel systems, each device sends several gigabytes of data, putting significant pressure on network bandwidth. This makes data transfer latency a major concern, while link latency becomes negligible. Then, tree and ring-based algorithms are introduced to optimize the data transfer. However, they do not apply to our case. In TPI-LLM, each device only sends a small amount of data, usually just tens of kilobytes. This tiny data size does not strain the network, so data transfer latency is minimal. Instead, in edge networks where wireless communication causes higher transmission delays, link latency becomes more significant than data transfer latency. As a result, the commonly used tree and ring-based allreduce algorithms are less effective.

Let us consider 1 master and 2 workers connected via a router. In Figure \ref{fig:traffic-model}, we compare the traffic models of star, tree, and ring-based algorithms. In star-based allreduce, worker 1 sends data directly to the master via the router, and the allreduce latency (includes reduce and broadcast) is $t_{\mathtt{star}}=2(t_{\mathtt{data}}+t_{\mathtt{link}})+t_{\mathtt{barrier}}+t_{\mathtt{aggr}}$. In this model, the router only forwards data packets.

\begin{figure}[H]
    \centering
    \includegraphics[width=0.9\linewidth]{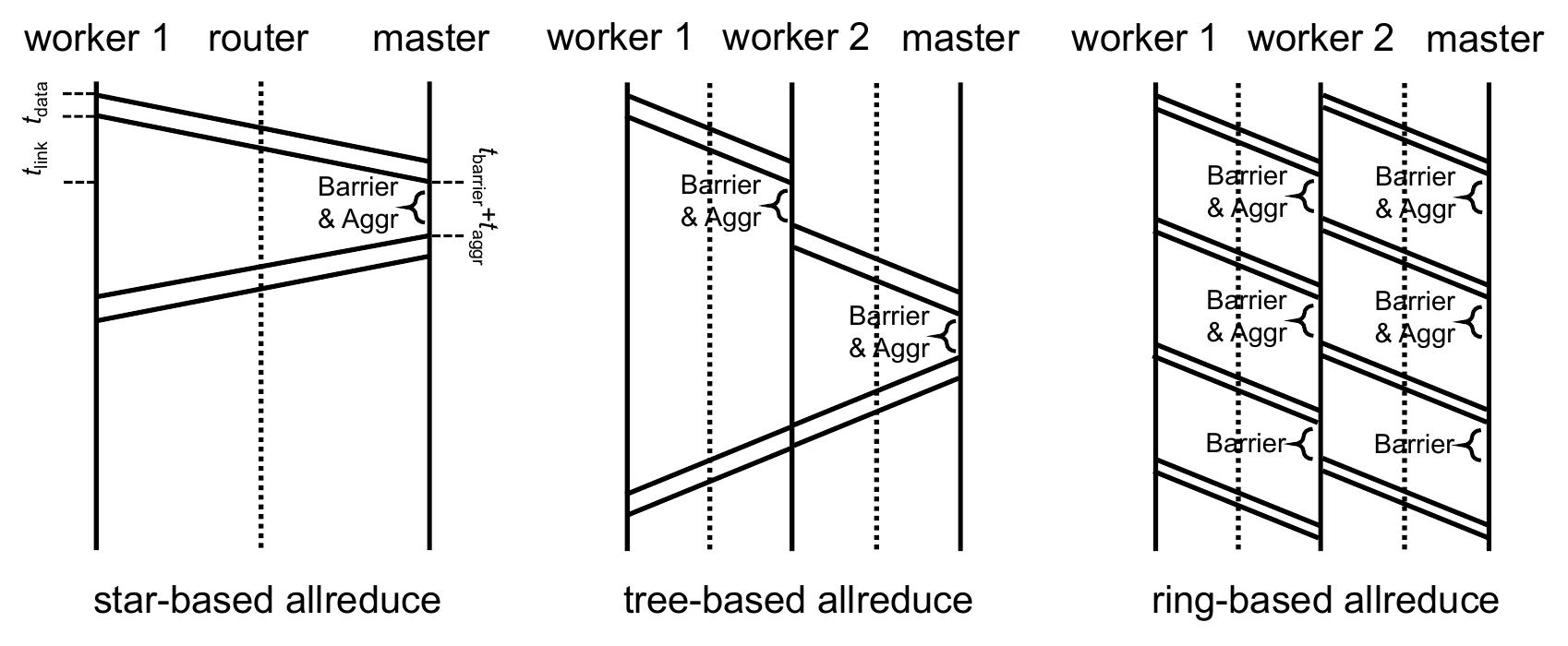}
    \caption{Comparison of traffic models for star, tree, and ring-based allreduce algorithms.}
    \label{fig:traffic-model}
\end{figure}

In tree-based allreduce, data from worker 1 must first go through worker 2 before reaching the master, so there are 2 hops involved. In this process, worker 1 sends its data to worker 2, which aggregates it and forwards the result to the master. Once the global aggregation is complete, the final result is broadcast back to all workers. The total time for this process is $t_{\mathtt{tree}}=3t_{\mathtt{data}}+4t_{\mathtt{link}}+2t_{\mathtt{barrier}}+2t_{\mathtt{aggr}}$.

In ring-based allreduce, each device communicates directly with its neighbors in a ring topology. Data is divided and sent in a sequence around the ring, with each device receiving, aggregating, and passing the data to the next device. Unlike star or tree-based methods, there is no central device, and data flows continuously between the devices. The total time for the ring-based allreduce is $t_{\mathtt{ring}}=\frac{4}{3}t_{\mathtt{data}}+4t_{\mathtt{link}}+3t_{\mathtt{barrier}}+\frac{2}{3}t_{\mathtt{aggr}}$.

Assume that all devices are homogeneous, i.e., $t_{\mathtt{barrier}}\approx 0$, and $t_{\mathtt{data}}\approx 0, t_{\mathtt{aggr}}\approx 0$ because the data size is very small. Then we have latencies simplified as follows:
\begin{equation}
t_{\mathtt{star}}=2t_{\mathtt{link}} < t_{\mathtt{tree}} = t_{\mathtt{ring}}=4t_{\mathtt{link}}.
\end{equation}
Thus, the star-based allreduce is the most efficient method because it minimizes link latency. 

\subsection{A simple-to-use memory scheduler}\label{appendix:use-memory-scheduler}
In our implementation, a context manager is used to ensure that the required block weights are loaded correctly and unload the used weights to free up memory for subsequent blocks. This simplifies the deployment of large-scale LLMs on low-memory edge devices, requiring just one additional line of code:

\begin{lstlisting}
with memory_manager.wait_and_release(f"self_attn.0"):
    hidden_states = self_attn(hidden_states)
\end{lstlisting}

\subsection{Proof of proposition 3}\label{appendix:proof-of-proposition-2}
We start with the first attention block and end with the final FFN block. 

\textit{Time slot 1 (attention computation):} In this initialization step, $\mW_{\mathtt{attn}}^1$ must be loaded before computing the first attention block, taking $\tau_{\mathtt{attn}} + t_{\mathtt{attn}}$. During the computation time $t_{\mathtt{attn}}$, the next FFN weights, $\mW_{\mathtt{ffn}}^1$, are loading in parallel.

\textit{Time slot 2 (allreduce):} The attention block is followed by allreduce communication, which takes $t_{\mathtt{all\_reduce}}$, with the next FFN weights, $\mW_{\mathtt{ffn}}^1$, loading in parallel.

\textit{Time slot 3 (FFN computation):} By this time, the FFN weights $\mW_{\mathtt{ffn}}^1$ should be fully loaded. If not, the computation must wait for loading to complete. Let $t^{\prime}=t_{\mathtt{attn}}+t_{\mathtt{all\_reduce}}-\tau_{\mathtt{ffn}}$, if $t^{\prime}\ge 0$, no blocking occurs; otherwise, the computation is delayed by $|t^{\prime}|$. Once loaded, compute the FFN block in $t_{\mathtt{ffn}}$. 

During this time slot, the waiting, computation of the current FFN block and the weight loading of the next attention block occur simultaneously. By the time the current FFN block finishes, the next attention block's weights $\mW_{\mathtt{attn}}^2$ have been loading for $\max\{0, t_{\mathtt{attn}} + t_{\mathtt{all\_reduce}} - \tau_{\mathtt{ffn}}\} + t_{\mathtt{ffn}}$.

\textit{Time slot 4 (allreduce):} The FFN block is followed by allreduce communication, which takes $t_{\mathtt{all\_reduce}}$, with the next attention weights, $\mW_{\mathtt{attn}}^2$, loading in parallel.

\textit{Time slot 5 (attention computation):} Ensure that the attention weights $\mW_{\mathtt{attn}}^2$ are fully loaded. Let $t^{\prime}=\max\{0, t_{\mathtt{attn}} + t_{\mathtt{all\_reduce}} - \tau_{\mathtt{ffn}}\} + t_{\mathtt{ffn}}+t_{\mathtt{all\_reduce}}-\tau_{\mathtt{attn}}$. If $t^{\prime}\ge 0$, the computation proceeds without blocking. Then, $\mW_{\mathtt{attn}}^2$ is computed in $t_{\mathtt{attn}}$, and the next FFN weights $\mW_{\mathtt{ffn}}^2$ have been loading for $\max\{0, \max\{0, t_{\mathtt{attn}} + t_{\mathtt{all\_reduce}} - \tau_{\mathtt{ffn}}\} + t_{\mathtt{ffn}}+t_{\mathtt{all\_reduce}}-\tau_{\mathtt{attn}}\}+t_{\mathtt{attn}}$.

\textit{Time slot 6 (allreduce):} The allreduce communication takes $t_{\mathtt{all\_reduce}}$, while the next FFN weights $\mW_{\mathtt{ffn}}^2$ are loading in parallel.

\textit{Time slot 7 (FFN computation):} Ensure that the FFN weights $\mW_{\mathtt{ffn}}^2$ are fully loaded. Let $t^{\prime}=\max\{0, \max\{0, t_{\mathtt{attn}} + t_{\mathtt{all\_reduce}} - \tau_{\mathtt{ffn}}\} + t_{\mathtt{ffn}}+t_{\mathtt{all\_reduce}}-\tau_{\mathtt{attn}}\}+t_{\mathtt{attn}}+t_{\mathtt{all\_reduce}}-\tau_{\mathtt{ffn}}$. If $t^{\prime}\ge 0$, the computation proceeds without blocking.

This process repeats, until the generation task is finished.

For the system to reach a steady state where computation is not blocked by weight loading at any time, the following conditions must hold.

\textit{Case 1: $t_{\mathtt{attn}}+t_{\mathtt{all\_reduce}}-\tau_{\mathtt{ffn}}\ge 0$.}
\begin{align}
\textit{Time slot 3 ($l=1$):} \quad &t_{\mathtt{attn}}+t_{\mathtt{all\_reduce}}-\tau_{\mathtt{ffn}}\ge 0, \\
\textit{Time slot 5 ($l=1$):} \quad &t_{\mathtt{attn}} + t_{\mathtt{ffn}}+ 2 t_{\mathtt{all\_reduce}} - \tau_{\mathtt{ffn}} -\tau_{\mathtt{attn}}\ge 0, \\
\textit{Time slot 7 ($l=2$):} \quad & 2t_{\mathtt{attn}} + t_{\mathtt{ffn}}+ 3 t_{\mathtt{all\_reduce}} - 2\tau_{\mathtt{ffn}} -\tau_{\mathtt{attn}}\ge 0, \\
\textit{Time slot 9 ($l=2$):} \quad & 
2t_{\mathtt{attn}} + 2t_{\mathtt{ffn}}+ 4 t_{\mathtt{all\_reduce}} - 2\tau_{\mathtt{ffn}} -2\tau_{\mathtt{attn}}\ge 0.
\end{align}
We repeat these conditions and derive the following patterns.
\begin{align}
t_{\mathtt{attn}} + t_{\mathtt{ffn}} + 2 t_{\mathtt{all\_reduce}} &\ge \tau_{\mathtt{ffn}} + \tau_{\mathtt{attn}}, \label{eq:condition-A} \\
l \cdot t_{\mathtt{attn}} + (l-1) \cdot t_{\mathtt{ffn}} + (2l-1) \cdot t_{\mathtt{all\_reduce}} &\ge l \cdot \tau_{\mathtt{ffn}} + (l-1) \cdot \tau_{\mathtt{attn}}. \label{eq:condition-B}
\end{align}

\textit{Case 2: $t_{\mathtt{attn}}+t_{\mathtt{all\_reduce}}-\tau_{\mathtt{ffn}}<0$.}
\begin{align}
\textit{Time slot 3 ($l=1$):} \quad &t_{\mathtt{attn}}+t_{\mathtt{all\_reduce}}-\tau_{\mathtt{ffn}}< 0, \\
\textit{Time slot 5 ($l=1$):} \quad & t_{\mathtt{ffn}}+t_{\mathtt{all\_reduce}}-\tau_{\mathtt{attn}} \ge 0, \\
\textit{Time slot 7 ($l=2$):} \quad & t_{\mathtt{attn}} + t_{\mathtt{ffn}} + 2t_{\mathtt{all\_reduce}} - \tau_{\mathtt{attn}} - \tau_{\mathtt{ffn}} \ge 0, \\
\textit{Time slot 9 ($l=2$):} \quad & t_{\mathtt{attn}} + 2t_{\mathtt{ffn}} + 3t_{\mathtt{all\_reduce}} - 2\tau_{\mathtt{attn}} - \tau_{\mathtt{ffn}} \ge 0, \\
\textit{Time slot 11 ($l=3$):} \quad & 2t_{\mathtt{attn}} + 2t_{\mathtt{ffn}} + 4t_{\mathtt{all\_reduce}} - 2\tau_{\mathtt{attn}} - 2\tau_{\mathtt{ffn}} \ge 0.
\end{align}
Similarly, repeat these conditions and derive the following patterns.
\begin{align}
t_{\mathtt{attn}} + t_{\mathtt{ffn}} + 2 t_{\mathtt{all\_reduce}} &\ge \tau_{\mathtt{ffn}} + \tau_{\mathtt{attn}}, \\
(l-1) \cdot t_{\mathtt{attn}} + l \cdot t_{\mathtt{ffn}} + (2l-1) \cdot t_{\mathtt{all\_reduce}} &\ge (l-1) \cdot \tau_{\mathtt{ffn}} + l \cdot \tau_{\mathtt{attn}}.
\end{align}
Thus, the proposition is proved.

\subsection{Proof of proposition 4}\label{appendix:proof-of-proposition-3}
Let $\alpha=l \cdot t_{\mathtt{attn}} + (l-1) \cdot t_{\mathtt{ffn}} + (2l-1) \cdot t_{\mathtt{all\_reduce}} - l \cdot \tau_{\mathtt{ffn}} - (l-1) \cdot \tau_{\mathtt{attn}} > 0$, and we derive the following inequality from inequality (\ref{eq:condition-A}):
\begin{equation}
l\cdot t_{\mathtt{attn}} + l\cdot t_{\mathtt{ffn}} + 2l\cdot t_{\mathtt{all\_reduce}} - l\cdot \tau_{\mathtt{ffn}} - l\cdot\tau_{\mathtt{attn}} > 0.
\end{equation}
By substituting $\alpha$ into this inequality, we have $\alpha+t_{\mathtt{ffn}}+t_{\mathtt{all\_reduce}}-\tau_{\mathtt{attn}}>0.$
Let $\alpha > 0 > \tau_{\mathtt{attn}} - t_{\mathtt{ffn}} - t_{\mathtt{all\_reduce}}$, we obtain the first condition:
\begin{equation}
t_{\mathtt{ffn}} + t_{\mathtt{all\_reduce}} > \tau_{\mathtt{attn}}.
\end{equation}
Let $\beta=t_{\mathtt{ffn}} + t_{\mathtt{all\_reduce}} - \tau_{\mathtt{attn}}>0$, and substitute $\beta$ into inequality (\ref{eq:condition-A}), then we have $\beta + t_{\mathtt{attn}}+t_{\mathtt{all\_reduce}}-\tau_{\mathtt{ffn}}>0$. Let $\beta > 0 > \tau_{\mathtt{ffn}} - t_{\mathtt{attn}}-t_{\mathtt{all\_reduce}}$, we obtain the second condition:
\begin{equation}
t_{\mathtt{attn}}+t_{\mathtt{all\_reduce}}>\tau_{\mathtt{ffn}}.
\end{equation}
Thus, the proposition is proved.

\subsection{Proof of proposition 5}\label{appendix:peak-mem-footprint-analysis}
In this section, we analyze the peak memory footprint on both the master and worker nodes to estimate the largest model size that our memory scheduler can handle.

Let us use the Llama model as an example, assume the vocabulary size is $v$, hidden size is $h$, number of attention heads is $a$, number of key-value heads is $b$, and intermediate size is $s$. Let $\vp=[p_1,p_2,\cdots,p_n]$ be a vector representing the proportion of parameters handled by $n$ devices, and $w$ be the window size of the memory scheduler. Following the block definition in Figure \ref{fig:system-overview}, the parameter counts for each block are detailed in Table \ref{table:parameter-count}:
\begin{table}[H]
\caption{Parameter counts for the main blocks (e.g., $n=4$, $p_i=0.25$, Llama 2-7B).}
\label{table:parameter-count}
\centering
\begin{tabular}{l|c|c}
\hline
\textbf{Block} & \textbf{Parameters} & \textbf{Block Size} \\ \hline\hline
Preprocess  & $hv$ & 500 MB \\ 
Attention   & $2(a+b)h^2p_i/a+h$ & 64 MB \\ 
FFN         & $3hsp_i+h$ & 129 MB \\
Postprocess & $hv+h$ & 500 MB \\ 
\hline
\end{tabular}
\end{table}
The memory footprint is affected by parameters, activation storage, temporary tensors, memory management, and caching, making precise quantification challenging. To estimate peak memory, we apply an empirical rule: multiply the parameter size by a scaling factor $\gamma$. 

\begin{figure}[H]
    \centering
    \includegraphics[width=0.7\linewidth]{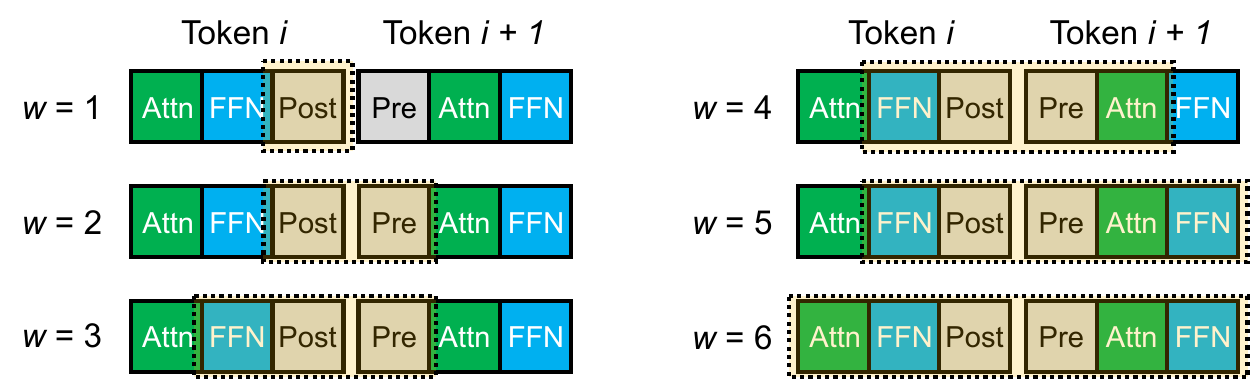}
    \caption{Illustration of the memory window at the peak memory footprint.}
    \label{fig:peak-mem-analysis}
\end{figure}

From the memory window at the peak memory footprint shown in Figure \ref{fig:peak-mem-analysis}, we can derive the following equations.
\begin{equation}
\nonumber
M_{\mathtt{master}} = \gamma \times
\begin{cases}
hv + h, & \text{if } w = 1 \\
2hv + h, & \text{if } w = 2 \\
2hv + h + \left\lfloor \frac{w-2}{2} \right\rfloor \left( 2(1+\frac{b}{a})h^2 p_i + h \right) + \left\lfloor \frac{w-1}{2} \right\rfloor (3hs p_i + h), & \text{if } w \ge 3
\end{cases}
\end{equation}
For any worker node, the memory footprint does not include the preprocess and postprocess blocks. Therefore, the peak memory footprint $M_{\mathtt{worker}}$ can be expressed as:
\begin{equation}
\nonumber
M_{\mathtt{worker}} = \gamma \times \left( \left\lfloor \frac{w}{2} \right\rfloor ( 2(1 + \frac{b}{a})h^2 p_i + h ) + \left\lfloor \frac{w+1}{2} \right\rfloor (3hs p_i + h) \right).
\end{equation}
Thus, the proposition is proved.

\subsection{Proof of proposition 6}\label{appendix:proof-of-proposition-6}
When the memory scheduler reaches a steady state, the overlap between computation, communication, and disk I/O is optimized, ensuring that weights are always pre-loaded before they are needed for computations. However, if disk I/O becomes a bottleneck and disrupts the steady state (e.g., due to high disk latency), the scheduler must adapt by selectively retaining certain blocks in memory to reduce disk access frequency.

In our preliminary experiments, we measured $t_{\mathtt{attn}}=11$ ms, $t_{\mathtt{ffn}}=17$ ms, $\tau_{\mathtt{attn}}=18$ ms, $\tau_{\mathtt{ffn}}=30$ ms, and observed that FFN blocks generally exhibit higher computation and weight loading latency. By retaining some FFN blocks in memory, we can reduce the need to reload large weights.

Let the memory scheduler retain one FFN block in memory every $T$ FFN blocks, and
\begin{equation}\nonumber
\mathbb{I}_{\{l = 1 + kT\}} =
\begin{cases}
1, & \text{if } l = 1 + kT \text{ and } k \in \mathbb{Z}{\geq 0}, \\
0, & \text{otherwise}.
\end{cases}
\end{equation}
Similar to the analysis in Appendix \ref{appendix:proof-of-proposition-2}, we have
\begin{align}
\nonumber
\textit{Time slot 3 ($l=1$):} \quad &t_{\mathtt{attn}}+t_{\mathtt{all\_reduce}}-(1-\mathbb{I}_{\{l = 1 + kT\}})\tau_{\mathtt{ffn}}\ge 0, \\
\nonumber
\textit{Time slot 5 ($l=1$):} \quad &t_{\mathtt{attn}} + t_{\mathtt{ffn}}+ 2 t_{\mathtt{all\_reduce}} - (1-\mathbb{I}_{\{l = 1 + kT\}})\tau_{\mathtt{ffn}} -\tau_{\mathtt{attn}}\ge 0, \\
\nonumber
\textit{Time slot 7 ($l=2$):} \quad & 2t_{\mathtt{attn}} + t_{\mathtt{ffn}}+ 3 t_{\mathtt{all\_reduce}} - \sum_{i=1}^2(1-\mathbb{I}_{\{i = 1 + kT\}})\tau_{\mathtt{ffn}} -\tau_{\mathtt{attn}}\ge 0, \\
\nonumber
\textit{Time slot 9 ($l=2$):} \quad & 
2t_{\mathtt{attn}} + 2t_{\mathtt{ffn}}+ 4 t_{\mathtt{all\_reduce}} - \sum_{i=1}^2(1-\mathbb{I}_{\{i = 1 + kT\}})\tau_{\mathtt{ffn}} -2\tau_{\mathtt{attn}}\ge 0, \\
\nonumber
\textit{Time slot 11 ($l=3$):} \quad & 3t_{\mathtt{attn}} + 2t_{\mathtt{ffn}}+ 5 t_{\mathtt{all\_reduce}} - \sum_{i=1}^3(1-\mathbb{I}_{\{i = 1 + kT\}})\tau_{\mathtt{ffn}} -2\tau_{\mathtt{attn}}\ge 0.
\end{align}
By repeating these conditions, we derive the following patterns:
\begin{align}
\nonumber
l\cdot t_{\mathtt{attn}}+l\cdot t_{\mathtt{ffn}}+2l\cdot t_{\mathtt{all\_reduce}} - \sum_{i=1}^l(1-\mathbb{I}_{\{i = 1 + kT\}})\tau_{\mathtt{ffn}} - l\cdot\tau_{\mathtt{attn}} &\ge 0, \\
\nonumber
l\cdot t_{\mathtt{attn}}+(l-1)\cdot t_{\mathtt{ffn}}+(2l-1)\cdot t_{\mathtt{all\_reduce}} - \sum_{i=1}^l(1-\mathbb{I}_{\{i = 1 + kT\}})\tau_{\mathtt{ffn}} - (l-1)\cdot\tau_{\mathtt{attn}} &\ge 0.
\end{align}
Since $\sum_{i=1}^l\mathbb{I}_{\{i = 1 + kT\}}=\left\lceil \frac{l}{T} \right\rceil$, we have
\begin{align}
\nonumber
l\cdot t_{\mathtt{attn}}+l\cdot t_{\mathtt{ffn}}+2l\cdot t_{\mathtt{all\_reduce}} &\ge (l-\left\lceil \frac{l}{T} \right\rceil)\cdot \tau_{\mathtt{ffn}} + l\cdot\tau_{\mathtt{attn}}, \\
\nonumber
l\cdot t_{\mathtt{attn}}+(l-1)\cdot t_{\mathtt{ffn}}+(2l-1)\cdot t_{\mathtt{all\_reduce}}&\ge (l-\left\lceil \frac{l}{T} \right\rceil)\cdot \tau_{\mathtt{ffn}}+(l-1)\cdot \tau_{\mathtt{attn}}.
\end{align}
Thus, the proposition is proved.

\subsection{Klonet testbed}\label{appendix:klonet-testbed}
One of our testbed, as shown in Figure \ref{fig:klonet-testbed}, was built upon Klonet \citep{ma2024klonet} to create an edge network environment. Klonet is a network emulation platform designed to support the development and testing of new network protocols and applications in a realistic environment. It can emulate various network scenarios, such as wireless, mobile, satellite, and optical networks, and provide fine-grained control over the network parameters, such as bandwidth, delay, jitter, and packet loss. It can also integrate with real devices and applications, such as routers, switches, sensors, and smartphones, to create hybrid network experiments. Klonet is based on the Linux operating system and uses virtualization and containerization technologies to create isolated network nodes and links. It provides both GUI and CLI to help users configure and manage their network experiments.

\begin{figure}[H]
    \centering
    \includegraphics[width=.65\textwidth]{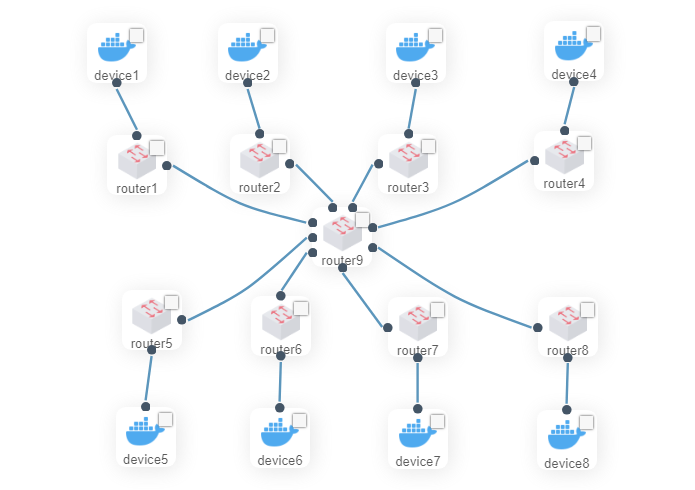}
    \caption{Testbed built upon Klonet.}
    \label{fig:klonet-testbed}
\end{figure}

This testbed includes 8 user devices (devices 1 to 8), 8 home gateways (routers 1 to 8), and 1 core router (router 9). User devices connect to their home gateways via wired or wireless connections, and these home gateways are interconnected through routers or switches in the edge network. This topology reflects real-world household network interconnections. In addition, the CPU cores, memory, swap limits, bandwidth, and latency settings in Section \ref{sec:exp} are based on measurements from the authors’ edge network.

\subsection{Configurations of the used models}\label{appendix:model-config}
\begin{table}[H]
    \centering
    \small
    \caption{Configurations of the used Llama models.}
    \label{tab:model-config}
    \begin{tabular}{l|c|c|c|c|c|c}
        \hline
        \textbf{Model (FP32)} & \textbf{Layers} & \textbf{Params} & \textbf{Hidden Size} & \textbf{Heads} & \textbf{KV Heads} & \textbf{Required Mem} \\
        \hline\hline
        Llama 2-3B & 26 & 3 billion & 3200 & 32 & -- & 14 GB \\
        Llama 2-7B & 32 & 7 billion & 4096 & 32 & -- & 26 GB \\
        Llama 2-13B & 40 & 13 billion & 5120 & 40 & -- & 50 GB  \\
        Llama 2-70B & 80 & 70 billion & 8192 & 64 & 8 & 257 GB \\
        \hline
        Llama 3.1-8B & 32 & 8 billion & 4096 & 32 & 8 & 31 GB \\
        Llama 3.1-70B & 80 & 70 billion & 8192 & 64 & 8 & 266 GB \\ 
        \hline
        Yi-34B & 60 & 34 billion & 7168 & 56 & 8 & 130 GB \\
        \hline
    \end{tabular}
\end{table}

\subsection{Peak memory footprint with memory window size 4}\label{appendix:peak-mem-window-4}

\begin{table}[H]
    \centering
    \small
    \caption{Peak memory footprint per device with the memory window size set to 4.}
    \label{tab:memory-footprint-per-node-window-size-4}
    \begin{tabular}{l|cccc|cccc}
        \hline
        & \multicolumn{4}{c|}{\textbf{Memory Scheduler Disabled (GB)}} & \multicolumn{4}{c}{\textbf{Memory Scheduler Enabled (GB)}} \\
        \hline
        \textbf{Model (FP32)} & $N=2$ & $N=4$ & $N=6$ & $N=8$ & $N=2$ & $N=4$ & $N=6$ & $N=8$ \\
        \hline\hline
        Llama 2-3B & 7.3 & 4.3 & 3.2 & 2.8 & 1.7 & 1.5 & 1.5 & 1.5 \\
        Llama 2-7B & 13.7 & 7.7 & 5.5 & 4.5 & 2.4 & 2.1 & 1.8 & 1.8 \\
        Llama 2-13B & 25.8 & 13.9 & 9.7 & 8.0 & 2.8 & 2.5 & 2.3 & 2.2 \\
        Llama 2-70B & 129.9 & 66.5 & 46.7 & 35.0 & 4.5 & 3.1 & 3.1 & 3.1 \\
        \hline
        Llama 3.1-8B & 18.4 & 11.8 & 9.4 & 8.5 & 6.3 & 5.8 & 5.6 & 5.5 \\
        Llama 3.1-70B & 137.8 & 74.0 & 51.4 & 42.5 & 10.8 & 10.5 & 11.5 & 11.4 \\
        \hline
        Yi-34B & 67 & 36.4 & 23.9 & 20.4 & 6.0 & 5.6 & 5.3 & 5.2 \\
        \hline
    \end{tabular}
\end{table}

\subsection{Real testbed and configurations}\label{appendix:laptop-config}
The real testbed consists of 4 laptops, all connected via a local Wi-Fi router, as shown in Figure \ref{fig:real-testbed}. Table \ref{table:laptop-config} details the hardware and network configurations of these laptops. In this case study, the laptop in the lower right serves as the master, while the other three laptops act as workers. The workers are connected to the master, and the master is currently generating the output sequence. The generated sequence is identical to that of single-server inference.

\begin{figure}[H]
    \centering
    \includegraphics[width=.94\linewidth]{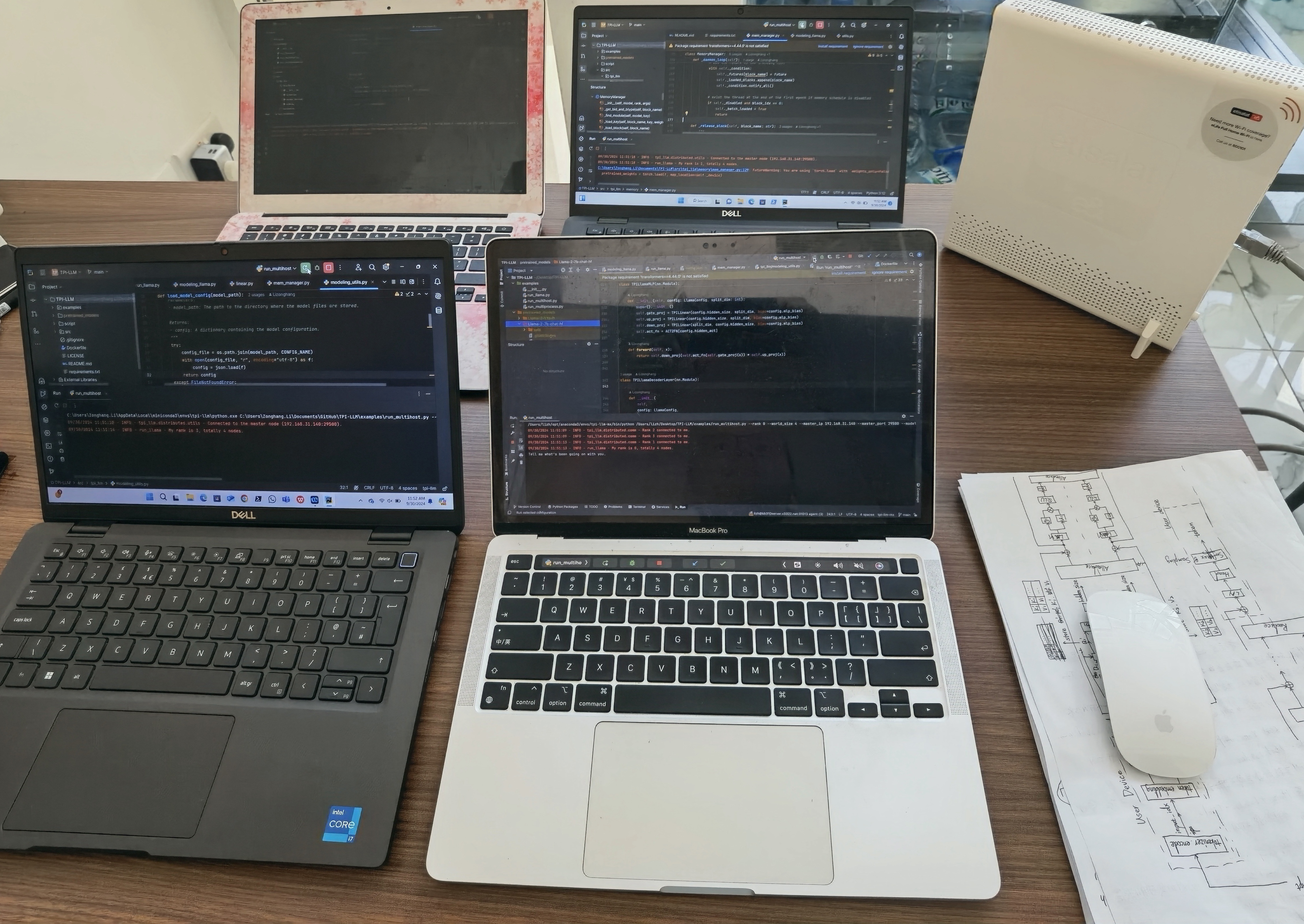}
    \caption{A real testbed composed of 4 laptops connected via local Wi-Fi.}
    \label{fig:real-testbed}
\end{figure}

\begin{table}[H]
\centering
\small
\caption{Hardware and network configurations of the laptops.}
\begin{tabular}{l|l|c|c|c|c|c|c}
\hline
\textbf{Device}      & \textbf{CPU Model}   & \textbf{Cores} & \textbf{Memory} & \textbf{Bandwidth} & \textbf{Latency} & \textbf{CUDA} & \textbf{Number} \\ \hline\hline
\textbf{Mac Pro} & Apple M1 & 8 & 8 GB & 510 Mbps & 5 ms & No & 1 \\ \hline
\textbf{Mac Air} & Intel Core i5 & 4 & 8 GB & 320 Mbps & 7 ms & No & 1 \\ \hline
\textbf{Dell}  & Intel i7-1165G7 & 8 & 16 GB & 610 Mbps & 3 ms & No & 2 \\ \hline
\end{tabular}
\label{table:laptop-config}
\end{table}

\subsection{Case study with 3 laptops}\label{appendix:case-study-3}
In this case, only 3 out of the 4 laptops are used: one MacBook Pro, one MacBook Air, and one Dell laptop. The results are given in Table \ref{table:real-exp-3}, indicating a slightly higher latency due to reduced parallelism.

\begin{table}[H]
\caption{Comparison of Transformers, Accelerate, Transformers+MS, and TPI-LLM on 3 laptops.}
\centering
\small
\begin{tabular}{l|cc|cc|cc|cc}
\hline
\multirow{2}{*}{\textbf{Model (FP32)}} & \multicolumn{2}{c|}{\textbf{Transformers}} & \multicolumn{2}{c|}{\textbf{Accelerate}} & \multicolumn{2}{c|}{\textbf{Transformers + MS}} & \multicolumn{2}{c}{\textbf{TPI-LLM}} \\
\cline{2-9}
 & \makecell{\textbf{TTFT}\\\textbf{(s)}} & \makecell{\textbf{Latency}\\\textbf{(s/token)}} & \makecell{\textbf{TTFT}\\\textbf{(s)}} & \makecell{\textbf{Latency}\\\textbf{(s/token)}} & \makecell{\textbf{TTFT}\\\textbf{(s)}} & \makecell{\textbf{Latency}\\\textbf{(s/token)}} & \makecell{\textbf{TTFT}\\\textbf{(s)}} & \makecell{\textbf{Latency}\\\textbf{(s/token)}} \\
\hline\hline
Llama 2-3B & 61 & 30 & 24 & 16 & 4 & 3 & \textbf{3} & \textbf{2} \\
Llama 2-7B & 115 & 56 & 30 & 26 & 13 & 8 & \textbf{7} & \textbf{6} \\
Llama 2-13B & OOM & OOM & OOM & OOM & 22 & 18 & \textbf{14} & \textbf{12} \\
\hline
Llama 3.1-8B & 133 & 65 & 37 & 31 & 20 & 12 & \textbf{13} & \textbf{9} \\
\hline
Yi-34B & OOM & OOM & OOM & OOM & 185 & 55 & \textbf{48} & \textbf{41} \\
\hline
\end{tabular}
\label{table:real-exp-3}
\end{table}

\end{document}